\documentclass[a4paper,fleqn]{cas-dc}
\usepackage[numbers,sort&compress]{natbib}

\usepackage{enumitem}

\usepackage{amsmath,amssymb,amsfonts,amsthm}
\usepackage{mathtools}
\DeclareMathOperator{\clip}{clip}

\usepackage{graphicx}
\usepackage{array}
\usepackage{subfigure}
\usepackage{tabularx,booktabs}
\usepackage{siunitx}
\usepackage{multirow}
\usepackage{float}
\usepackage{threeparttable}
\graphicspath{{images/}}

\usepackage[table]{xcolor}
\usepackage{soul}
\sethlcolor{yellow!40}

\soulregister\cite7
\soulregister\url7
\soulregister\ref7
\soulregister\label7
\soulregister\eqref7
\soulregister\medskip7
\soulregister\noindent7
\soulregister\textbf7
\soulregister\tanh7
\soulregister\mathrm7
\soulregister\bigotimes7
\soulregister\emph7
\soulregister\mathbf7
\soulregister\mathcal7

\usepackage{algorithm}
\usepackage{algpseudocode}

\usepackage{textcomp}
\usepackage{verbatim}
\usepackage{pifont}
\usepackage{braket}
\usepackage{hyperref}
\usepackage{orcidlink}

\newtheorem{theorem}{Theorem}
\newtheorem{lemma}{Lemma}
\newtheorem{definition}{Definition}
\newtheorem{corollary}{Corollary}

\newcommand{\best}[1]{\textbf{#1}}
\newcommand{\secondbest}[1]{\underline{#1}}
\newcommand{\NAME}{BARFI-Q}

\begin{document}
\let\WriteBookmarks\relax
\def\floatpagepagefraction{1}
\def\textpagefraction{.001}

\shorttitle{BARFI-Q Framework for Atom Interferometry Forecasting}
\shortauthors{Dastagir et al.}

\title[mode=title]{BARFI-Q: Quantum-Enhanced Block Attention Residual Fusion Framework for Multivariate Time-Series Forecasting in Atom Interferometry}
\tnotemark[1]

\author[1]{Muhammad Bilal Akram Dastagir}[orcid=0000-0003-2990-4604]
\cormark[1]
\ead{mdastagir@hbku.edu.qa}
\credit{Conceptualization, Problem Formulation, Methodology, Experimental, Writing--original draft, Writing--review and editing}
\author[2]{Omer Tariq}
\ead{omertariq@kaist.ac.kr}
\credit{Validation, Writing--review and editing}
\author[1]{Safaa Alqrinawi}
\ead{saal88803@hbku.edu.qa}
\credit{Data Preparation, Writing--review and editing}
\author[1]{Shaikha Al-Naimi}
\ead{shal45176@hbku.edu.qa}
\credit{Literature-Review, Writing--review and editing}
\author[1]{Ahmed Farouk}
\ead{ahsalem@hbku.edu.qa}
\credit{Writing--review and editing}
\author[1]{Saif Al-Kuwari}
\ead{smalkuwari@hbku.edu.qa}
\credit{Supervision, Writing--review and editing}

\affiliation[1]{organization={Qatar Center for Quantum Computing, College of Science and Engineering, Hamad Bin Khalifa University},
            city={Doha},
            country={Qatar}}
\affiliation[2]{organization={School of Computing, Korea Advanced Institute of Science and Technology},
city={Daejeon},
country={South Korea}}

\cortext[1]{Corresponding author.}

\begin{abstract}
Atom interferometry generates heterogeneous multivariate temporal streams governed by phase evolution, fringe dynamics, control variables, and auxiliary sensing measurements. Accurate forecasting of these signals is important for predictive monitoring, phase correction, and intelligent quantum sensing, but it requires effective modeling of long-range temporal dependencies and interactions among multiple sensing sources. This paper proposes BARFI-Q, a Quantum-Enhanced Block Attention Residual Fusion framework for multivariate time-series forecasting in atom interferometry. BARFI-Q integrates patch-based embedding, dual-branch temporal modeling, hierarchical fusion, adaptive block-attention residual aggregation, and a quantum feature-mapping module. Unlike conventional Transformer-based forecasting models with fixed additive residual paths, BARFI-Q adaptively reuses cross-depth information and enhances the fused latent representation through quantum feature mapping. To respect phase periodicity, the forecasting target is represented in circular space using sine and cosine components. Experiments show that BARFI-Q consistently outperforms strong baseline models across repeated runs and different historical window sizes. Fusion ablation results further confirm the benefit of jointly modeling channel-wise and spatial feature interactions. These results indicate that multiscale temporal learning, hierarchical fusion, adaptive residual routing, and quantum-enhanced latent transformation provide an effective framework for atom-interferometric time-series forecasting.
\end{abstract}

\begin{graphicalabstract}
    \centering
    \includegraphics[width=\textwidth,keepaspectratio]{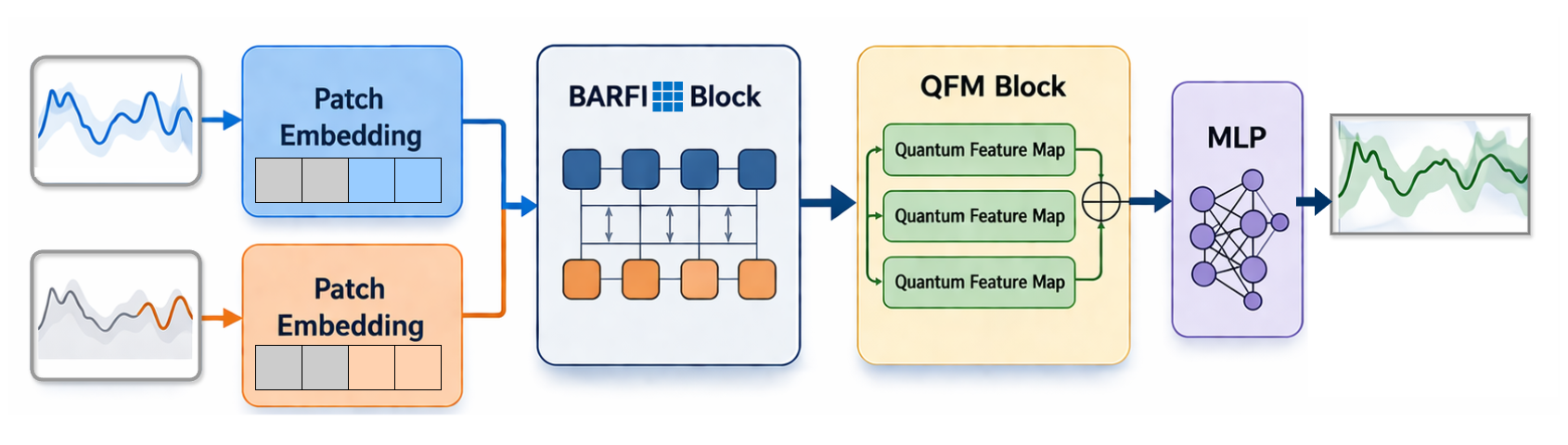}
\end{graphicalabstract}

\begin{highlights}
\item BARFI-Q enables predictive fusion for heterogeneous atom-interferometric streams.
\item Dual-branch learning and hierarchical fusion improve multiscale temporal modeling.
\item BAR aggregation and QFM enhance residual reuse and phase forecasting accuracy.
\end{highlights}

\begin{keywords}
Information fusion \sep atom interferometry \sep heterogeneous sensing streams \sep multivariate time-series forecasting \sep predictive sensing \sep hierarchical fusion \sep block attention residual aggregation \sep quantum feature mapping
\end{keywords}
\date{\today}
\maketitle

\section{Introduction}
\label{sec:introduction}

Atom interferometry has emerged as a powerful sensing paradigm for precision measurement, inertial navigation, gravimetry, and phase-related inference in complex physical environments \cite{geiger2011airborne,barrett2016weightlessness,canuel2018miga,decastanet2024atom}. In practical atom-interferometric systems, the sensing pipeline does not produce a single isolated signal. Instead, it generates heterogeneous multivariate temporal streams arising from phase evolution, fringe behavior, control variables, timing information, auxiliary measurements, and environmental perturbations. These interacting streams jointly influence the downstream sensing quality and the accuracy of phase-related inference. As a result, forecasting future interferometric behavior from historical observations is not only a temporal prediction problem but also a predictive information fusion problem across multiple correlated sensing sources.

\begin{figure*}
    \centering
    \includegraphics[width=0.9\textwidth]{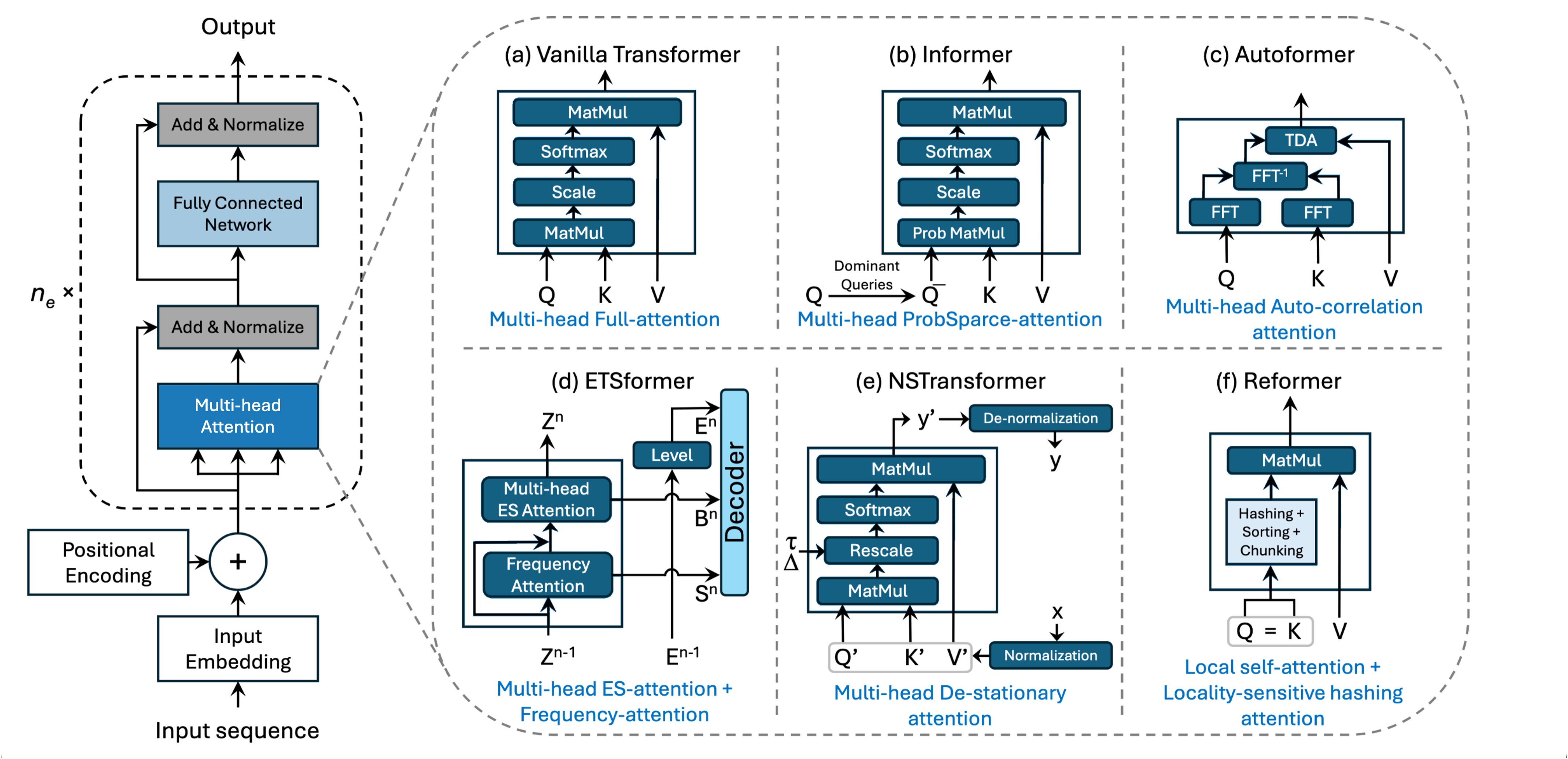}
    \caption{Representative Transformer-based time-series forecasting models and attention mechanisms, including Vanilla Transformer, Informer, Autoformer, ETSformer, NSTransformer, and Reformer. Most existing models improve forecasting mainly through modified attention, decomposition, or tokenization, while largely preserving standard additive residual propagation \cite{tranformersforecastingmath13050814}.}
    \label{fig:tsf_models}
\end{figure*}

Accurate forecasting of atom-interferometric signals is important for predictive monitoring, temporal stabilization, adaptive correction, and intelligent quantum sensing pipelines. In particular, the ability to anticipate future phase-related trajectories or residual errors can improve the robustness of real-time estimation and support downstream decision-making under noisy and dynamically changing sensing conditions. However, such forecasting is challenging because atom-interferometric measurements are often nonlinear, temporally structured, multiscale, and subject to coupled interactions among heterogeneous variables.

In parallel, deep learning for time-series forecasting has advanced substantially in recent years. Transformer-based architectures, originally introduced for sequence modeling in natural language processing \cite{vaswani2017attention}, have become especially influential in forecasting due to their strong ability to model long-range temporal dependencies, global context, and complex cross-variable interactions. Informer \cite{zhou2021informer}, Autoformer \cite{wu2021autoformer}, FEDformer \cite{zhou2022fedformer}, PatchTST \cite{nie2023patchtst}, and iTransformer \cite{liu2024itransformer} have demonstrated that attention design, tokenization strategy, decomposition, and representation layout all play important roles in forecasting performance. These advances motivate the use of Transformer-style backbones for atom-interferometric time-series modeling, where the signals are multivariate, structured, and often noisy.

Despite these advances, an important limitation persists in most forecasting backbones: the residual pathway is typically implemented as the standard additive shortcut inherited from the original Transformer. Although additive residuals improve optimization and gradient flow, they aggregate intermediate representations in a largely uniform manner and do not explicitly learn which earlier block outputs to preserve, suppress, or revisit. In deep forecasting architectures, this may cause three related issues. First, it can lead to \emph{signal dilution}, where informative early representations are repeatedly mixed with less useful intermediate states. Second, it may introduce \emph{uncontrolled hidden-state magnitude growth} as features accumulate across depth. Third, it may result in \emph{information loss}, since useful earlier cues are not selectively retrieved and can gradually fade from the final representation. These challenges become particularly relevant when different layers encode complementary temporal scales, local dynamics, and cross-stream interactions \cite{chen2026attnres}.

A second limitation is that most existing forecasting models remain purely classical in their latent-feature interaction pipelines. While such models can be highly effective, they may underutilize richer nonlinear transformations that could be beneficial for scientific sensing data with structured and coupled dynamics. Quantum-enhanced feature representations offer a promising direction in this regard, especially when they are used as internal latent enhancement modules rather than as fully independent quantum predictors \cite{havlicek2019quantum,schuld2019quantum,dunjko2018machine,schuld2015introduction}. For atom interferometry, where subtle interactions between sensing variables may encode informative phase behavior, a hybrid quantum-classical design can provide additional representational flexibility without sacrificing the scalability of modern deep learning.

From an information fusion perspective, these observations highlight that accurate atom-interferometric forecasting requires more than a generic sequence model. A suitable architecture should jointly capture long-range temporal dependence, local multiscale patterns, cross-stream feature interactions, and adaptive reuse of informative intermediate representations. It should also be able to fuse heterogeneous sensing streams in a principled manner so that complementary information from different variables and branches can be effectively integrated for downstream prediction. A plain Transformer may capture part of this structure, but a more specialized fusion-driven architecture is desirable for atom-interferometric sensing environments.

To address these challenges, we propose \textbf{BARFI-Q}, a \textbf{B}lock \textbf{A}ttention \textbf{R}esidual \textbf{F}usion framework with \textbf{Q}uantum feature enhancement for multivariate time-series forecasting in atom interferometry. BARFI-Q is designed as a hybrid quantum-classical predictive information fusion framework built around four key ideas. First, it employs a dual-branch temporal architecture to process heterogeneous sensing streams and preserve branch-specific temporal characteristics. Second, it introduces hierarchical fusion blocks to integrate complementary features through multi-scale local filtering and attention-guided refinement. Third, it replaces uniform residual accumulation with \emph{Block Attention Residual} aggregation, enabling adaptive cross-depth information reuse instead of fixed additive propagation. Fourth, it incorporates a \emph{Quantum Feature Mapping} (QFM) module to enrich the fused latent representation before the final prediction.

The resulting framework is not intended to be purely a quantum predictor. Rather, BARFI-Q uses quantum feature mapping as a post-fusion latent enhancement stage within a modern deep temporal architecture. This design retains the optimization stability, flexibility, and scalability of classical forecasting backbones while introducing a richer nonlinear representation mechanism for complex interferometric dynamics. In this way, BARFI-Q expands the design space of predictive models for scientific sensing and provides a fusion-centered formulation for forecasting heterogeneous atom-interferometric time series.

\subsection{Contributions}
The main contributions of this paper are summarized as follows:
\begin{itemize}
    \item We propose \textbf{BARFI-Q}, a novel quantum-enhanced block attention residual fusion framework for multivariate time-series forecasting in atom interferometry.
    
    \item We formulate atom-interferometric signal prediction as a \textbf{fusion-driven forecasting problem} over heterogeneous multivariate sensing streams, including historical observations, phase-related variables, and auxiliary covariates.
    
    \item We design a \textbf{dual-branch hierarchical fusion architecture} that integrates branch-specific temporal representations through multi-scale filtering and attention-guided feature interaction.
    
    \item We introduce \textbf{Block Attention Residual} aggregation to enable adaptive reuse of cross-depth information and mitigate the limitations of fixed additive residual propagation in deep forecasting models.
    
    \item We incorporate a \textbf{Quantum Feature Mapping (QFM)} module to enhance the fused latent representation and improve nonlinear feature extraction for atom-interferometric forecasting.
    
    \item We provide an end-to-end predictive framework that unifies temporal modeling, hierarchical fusion, adaptive residual routing, and quantum-enhanced latent transformation within a single architecture.
\end{itemize}

\subsection{Organization}
The remainder of this paper is organized as follows. Section~\ref{sec:relatedwork} reviews the most relevant literature on Transformer-based time-series forecasting, residual learning in deep sequence models, fusion-driven representation learning, atom-interferometric signal modeling, and hybrid quantum feature mapping. Section~\ref{sec:problem_formulation} formulates the considered atom-interferometric forecasting task in circular space and defines the corresponding learning objective. Section~\ref{sec:methodology} presents the proposed BARFI-Q framework, including atomic phase reconstruction, patch-based input representation, the dual-branch BAR Transformer, hierarchical fusion, the quantum feature mapping module, and the forecasting head, together with the supporting theoretical analysis. Section~\ref{sec:exp_results} reports the experimental setup, baseline comparisons, backbone replacement study, robustness analysis across window sizes, and ablation results. Finally, Section~\ref{sec:conclusion} concludes the paper and outlines promising directions for future research.

\begin{table*}[t]
\centering
\caption{Comparison of representative related work. The novelty, strengths, and weaknesses summarize each model's design trade-offs for multivariate time-series forecasting.}
\label{tab:related_work_comparison}
\scriptsize
\setlength{\tabcolsep}{4pt}
\renewcommand{\arraystretch}{1.18}
\begin{tabularx}{\textwidth}{p{2.3cm} p{0.85cm} X X X}
\toprule
\textbf{Method} & \textbf{Year} & \textbf{Novelty} & \textbf{Strengths} & \textbf{Weaknesses} \\
\midrule
FEDformer \cite{zhou2022fedformer}
& 2022
& Frequency-enhanced decomposition with sparse frequency selection for long-term forecasting.
& Strong long-horizon efficiency and effective frequency-domain trend/seasonality modeling.
& May be less responsive to abrupt local transients and regime changes. \\

Crossformer \cite{crossformer2023}
& 2023
& Cross-dimension dependency modeling with dimension-segment-wise embedding and two-stage attention.
& Explicitly captures cross-variable interactions and strong multivariate structure.
& More structurally complex than lightweight baselines; residual pathway remains standard. \\

DLinear \cite{dlinear2023}
& 2023
& Decomposition plus a simple linear forecasting head.
& Extremely efficient and surprisingly strong as a long-horizon baseline.
& Limited expressive capacity for nonlinear inter-series dependencies and rich contextual interactions. \\

PatchTST \cite{nie2023patchtst}
& 2023
& Patch tokenization with channel-independent Transformer forecasting.
& Strong temporal abstraction and scalable long-term forecasting.
& Channel independence can weaken explicit inter-series dependency modeling. \\

TimesNet \cite{timesnet2023}
& 2023
& Temporal two-dimensional variation modeling through period-aware transformation.
& Strong multi-period pattern extraction and a versatile general-purpose backbone.
& Can over-smooth abrupt local changes and increase representational complexity. \\

CrossGNN \cite{crossgnn2023}
& 2023
& Cross-scale and cross-variable graph interaction refinement under noisy multivariate dynamics.
& Effective for noisy multivariate series with explicit scale and variable interaction modeling.
& Graph construction and refinement add complexity; no adaptive residual routing across depth. \\

NanoMST \cite{tariq2025nanomst}
& 2024
& Lightweight multiscale temporal modeling for compact multivariate forecasting.
& Computationally efficient and suitable for smaller-scale forecasting settings with reduced parameter overhead.
& Limited representational capacity compared with richer attention-fusion architectures; does not explicitly model adaptive residual routing or quantum-enhanced latent structure. \\

iTransformer \cite{liu2024itransformer}
& 2024
& Inverted tokenization that treats variables as tokens rather than time steps.
& Strong variable-centric representation learning with an elegant architectural reformulation.
& Residual pathway remains conventional; no explicit adaptive cross-depth retrieval. \\

Leddam \cite{leddam2024}
& 2024
& Learnable decomposition with dual attention for inter-series and intra-series dependencies.
& Joint trend-seasonal handling with parallel dependency modeling.
& Separate attention branches still rely on fixed residual accumulation. \\

MSGNet \cite{msgnet2024}
& 2024
& Multi-scale inter-series correlation learning using frequency decomposition and graph convolution.
& Captures scale-varying inter-series relations effectively and provides strong cross-series reasoning.
& Graph-heavy design may increase structural complexity and computational overhead. \\

SOFTS \cite{softs2024}
& 2024
& Series-core fusion with the STAR module for efficient multivariate forecasting.
& Efficient and scalable with strong channel interaction at comparatively low complexity.
& MLP-centric design may be less expressive than attention-based methods for fine-grained temporal relations. \\

TSLANet \cite{tslanet}
& 2024
& Temporal self-learning architecture for multivariate forecasting with emphasis on efficient sequence modeling.
& Offers strong forecasting accuracy, streamlined temporal representation learning, and competitive performance across multiple horizons.
& Lacks explicit fusion-driven dual-branch interaction, adaptive block-level residual routing, and quantum-enhanced feature transformation. \\

DUET \cite{duet2024}
& 2024
& Dual clustering over temporal and channel dimensions for heterogeneous multivariate forecasting.
& Handles temporal heterogeneity and complex channel relations effectively.
& Clustering modules increase pipeline complexity; residual propagation remains standard. \\

TimeFilter \cite{timefilter2025}
& 2025
& Patch-specific spatial-temporal graph filtration for adaptive fine-grained dependency modeling.
& Filters irrelevant correlations while preserving critical patch-level spatial-temporal dependencies; strong graph-based dependency learning.
& Graph construction and filtration add complexity; limited emphasis on adaptive residual reuse across depth. \\

\midrule
BARFI-Q (Ours)
& This work
& Quantum-enhanced block attention residual fusion framework for multivariate forecasting in atom interferometry.
& Integrates cross-stream feature fusion, adaptive block-level information reuse across depth, and quantum-enhanced latent representation learning in a unified end-to-end backbone; well-suited for heterogeneous sensing streams and phase-related temporal dynamics.
& Higher architectural complexity than lightweight baselines and additional design sensitivity in balancing fusion, residual routing, and quantum-enhanced feature mapping. \\
\bottomrule
\end{tabularx}
\end{table*}

\section{Related Work}
\label{sec:relatedwork}

\subsection{Time-Series Forecasting with Transformers}
Transformer-based models have become a dominant paradigm for long-horizon and multivariate time-series forecasting because of their strong ability to model long-range dependencies, global temporal context, and complex inter-variable relationships. Informer \cite{zhou2021informer} introduced ProbSparse attention to improve computational efficiency for long sequences, while Autoformer \cite{wu2021autoformer} incorporated decomposition and Auto-Correlation to better capture temporal structure. FEDformer \cite{zhou2022fedformer} further advanced long-term forecasting through frequency-enhanced decomposition and sparse frequency-domain modeling. In parallel, DLinear \cite{dlinear2023} demonstrated that decomposition-based linear models can remain highly competitive, emphasizing the importance of strong temporal inductive bias in forecasting.

Subsequent studies explored improved tokenization and representation layouts. PatchTST \cite{nie2023patchtst} showed that patch-wise tokenization and channel-independent forecasting can be highly effective for long-range multivariate prediction, while iTransformer \cite{liu2024itransformer} revisited the tokenization scheme by treating variables as tokens rather than time steps. Crossformer \cite{crossformer2023} explicitly modeled cross-dimension dependencies using dimension-segment-wise embeddings and two-stage attention, while TimesNet \cite{timesnet2023} reformulated temporal modeling through period-aware two-dimensional variation learning.

Another important research direction focuses on graph-based interaction learning and multi-scale temporal representation. CrossGNN \cite{crossgnn2023} addressed noisy multivariate sequences through cross-interaction refinement, MSGNet \cite{msgnet2024} modeled multi-scale inter-series correlations using decomposition and graph reasoning, and TimeFilter \cite{timefilter2025} introduced patch-specific spatial-temporal graph filtration for fine-grained dependency modeling. Similarly, SOFTS \cite{softs2024} improved forecasting efficiency through series-core fusion, Leddam \cite{leddam2024} employed a learnable decomposition with dual attention for inter-series and intra-series dependencies, and DUET \cite{duet2024} used dual clustering across temporal and channel dimensions for heterogeneous multivariate forecasting.

Recent efficient architectures, such as NanoMST \cite{tariq2025nanomst} and TSLANet \cite{tslanet}, have further explored compact multiscale temporal modeling and streamlined temporal self-learning. Although these methods offer competitive performance and efficiency, they generally do not jointly address hierarchical feature fusion, adaptive residual reuse across depth, and quantum-enhanced latent transformation within a unified framework. Overall, prior work shows that forecasting performance depends not only on the attention mechanism itself, but also on tokenization, decomposition, graph interaction, and feature fusion. These observations motivate the development of BARFI-Q as a fusion-driven and quantum-enhanced forecasting framework tailored to heterogeneous atom-interferometric sensing streams.

\subsection{Residual Learning in Deep Sequence Models}
Residual connections are fundamental to deep sequence modeling because they improve optimization stability and facilitate information flow across layers. However, in most forecasting architectures, the residual path is still implemented as a fixed additive shortcut inherited from the original Transformer design \cite{vaswani2017attention}. Although effective, such propagation remains structurally simple and does not explicitly determine which historical representations should be retained, suppressed, or revisited.

In deep forecasting architectures, uniform residual accumulation may lead to several limitations. First, it can cause \emph{signal dilution}, where earlier informative features are repeatedly mixed with less useful intermediate states. Second, it may result in \emph{uncontrolled hidden-state magnitude growth}, as representations are accumulated across increasing depth. Third, it may contribute to \emph{information loss}, because useful early-layer cues are not selectively retrieved and may gradually vanish from the final representation. These limitations are especially relevant in multivariate forecasting, where different layers can encode complementary temporal scales, local fluctuations, and cross-feature interactions.

Recent work has begun to revisit the role of the residual pathway. RealFormer \cite{he2021realformer} showed that residual attention can improve Transformer training and stabilize attention learning. More recent attention-residual formulations further argued that uniform accumulation across depth can weaken individual layer contributions and proposed attention-based aggregation over prior layers or blocks \cite{kimi2026attentionresiduals}. These ideas are particularly relevant for multivariate forecasting, where different layers often capture distinct temporal patterns and inter-series structures. Motivated by this perspective, BARFI-Q treats the residual path as an adaptive information-routing mechanism and introduces block-level attention-based residual aggregation, rather than relying solely on fixed additive shortcuts.

\subsection{Fusion and Multi-Scale Representation Learning}
Forecasting in scientific sensing systems often requires integrating multiple heterogeneous streams and modeling their interactions across both channels and temporal scales. For this reason, fusion mechanisms have become central to modern representation learning, especially when the goal is to combine complementary branch-specific features, local temporal patterns, and cross-channel dependencies. Multi-scale convolutions and attention-guided refinement mechanisms have been widely used to improve hierarchical feature learning and feature interaction, particularly in settings involving channel attention and structural attention \cite{woo2018cbam,hu2018senet}.

In multivariate time-series forecasting, related ideas appear in decomposition-based, graph-based, and multi-scale designs, including Crossformer \cite{crossformer2023}, MSGNet \cite{msgnet2024}, CrossGNN \cite{crossgnn2023}, SOFTS \cite{softs2024}, and TimeFilter \cite{timefilter2025}. These models improve the coordination of information across variables, time resolutions, or structural neighborhoods, but do not explicitly formulate the overall architecture as a predictive information fusion framework for heterogeneous sensing streams.

In contrast, BARFI-Q places fusion at the center of the forecasting backbone. Rather than processing all inputs through a single unified stream, the proposed framework employs a dual-branch design with hierarchical fusion blocks that integrate local filtering, attention-guided refinement, and cross-stream interaction. This design is particularly appropriate for atom interferometry, where multiple sensing variables and auxiliary covariates jointly influence future phase behavior, and robust feature-level fusion is essential for accurate predictive modeling.

\subsection{Learning for Atom Interferometric Signal Modeling}
Machine learning for atom interferometry and related scientific sensing applications is receiving increasing attention as a means of improving signal analysis, phase estimation, and robust system modeling. Existing learning-based efforts have explored support for quantum sensing, interferometric analysis, and physics-aware estimation pipelines, showing that AI can help extract useful structure from noisy measurements \cite{dunjko2018machine,schuld2015introduction}. However, most prior work has focused on regression, estimation, denoising, or downstream inference rather than explicit multistep forecasting over heterogeneous sensing streams.

From a predictive sensing perspective, forecasting is a natural extension of these efforts. Future phase-related trajectories, residual phase behavior, or interferometric responses can support predictive monitoring, stabilization, and adaptive decision-making in real-time sensing pipelines. In atom interferometry, future behavior may depend jointly on historical lags, auxiliary measurements, control-related variables, and phase-dependent observations. This motivates a forecasting architecture that can simultaneously model multivariate temporal dependence, cross-stream feature interaction, and nonlinear latent structure. BARFI-Q is designed specifically for this setting by combining temporal forecasting, hierarchical fusion, adaptive residual propagation, and quantum-enhanced representation learning in a single end-to-end framework.

\subsection{Quantum Feature Mapping in Hybrid Learning}
Quantum machine learning has increasingly explored the use of quantum feature maps to embed classical inputs into richer latent spaces \cite{havlicek2019quantum,schuld2019quantum}. Rather than constructing fully quantum end-to-end predictors, many practical hybrid approaches employ quantum modules as feature enhancement stages within otherwise classical learning pipelines. This hybrid strategy is attractive when the application demands the scalability and optimization stability of classical deep learning, together with richer nonlinear feature transformation.

In BARFI-Q, the Quantum Feature Mapping (QFM) module is introduced in precisely this hybrid role. It is not used as an isolated quantum predictor but rather as a post-fusion feature-transformation stage within a modern multivariate forecasting backbone. By applying quantum-enhanced mapping following temporal modeling and hierarchical fusion, BARFI-Q aims to improve the quality of latent representations for complex atom-interferometric dynamics while preserving the flexibility and scalability of deep Transformer-style forecasting. This distinguishes BARFI-Q from the existing forecasting baselines in Table~\ref{tab:related_work_comparison}, which remain entirely classical in their internal representation pipeline.

Table~\ref{tab:related_work_comparison} summarizes representative related methods and highlights their main novelty, strengths, and limitations relative to the design motivation of \textbf{BARFI-Q}.

\begin{table}[t]
\centering
\caption{Main notation used throughout the BARFI-Q framework.}
\label{tab:notation}
\setlength{\tabcolsep}{3pt}
\renewcommand{\arraystretch}{1.0}
\footnotesize
\begin{tabularx}{\columnwidth}{p{1.7cm} X}
\toprule
\textbf{Symbol} & \textbf{Description} \\
\midrule
$t$ & Time-step index. \\
$M$ & Number of observed variables (input channels). \\
$L$ & Historical look-back window length. \\
$H$ & Forecasting horizon length. \\
$\mathbf{x}_t$ & Multivariate observation at time step $t$. \\
$\mathbf{X}_{t-L+1:t}$ & Historical multivariate input window. \\
$\delta\phi_t$ & Wrapped residual phase target at time step $t$. \\
$y_t^{(1)}$ & Circular target component $\cos(\delta\phi_t)$. \\
$y_t^{(2)}$ & Circular target component $\sin(\delta\phi_t)$. \\
$\mathbf{y}_t$ & Circular residual-phase target vector $[\cos(\delta\phi_t), \sin(\delta\phi_t)]^\top$. \\
$\mathbf{y}_t$ & Circular target vector $[\cos(\phi_t), \sin(\phi_t)]^\top$. \\
$\mathbf{Y}_{t+1:t+H}$ & Future target sequence over forecasting horizon $H$. \\
$f_{\theta}$ & Parametric forecasting model with learnable parameters $\theta$. \\
$\hat{\mathbf{Y}}_{t+1:t+H}$ & Predicted future circular trajectory. \\
$\hat{\phi}_{t+H}$ & Reconstructed phase from the predicted circular output. \\
$\mathcal{L}_{\mathrm{MSE}}$ & Mean squared error loss in circular space. \\
$\mathcal{L}_{\mathrm{cos}}$ & Cosine-alignment regularization term. \\
$\mathcal{L}_{\mathrm{total}}$ & Total optimization objective. \\
$\lambda$ & Weight for cosine-alignment loss. \\
$e_{t+H}$ & Wrapped phase error at horizon $H$. \\
$P(\theta)$ & Measured interferometric population ratio as a function of phase. \\
$P_0$ & Mean fringe offset. \\
$A$ & Fringe amplitude in the original cosine model. \\
$C$ & Fringe contrast, where $A = 2C$. \\
$\phi_{\mathrm{AI}}$ & Atomic / interferometric phase. \\
$a,b$ & Linearized fringe coefficients in the cosine--sine model. \\
$\mathbf{X}$ & Design matrix for local least-squares fringe fitting. \\
$\mathbf{p}$ & Observation vector for local least-squares fringe fitting. \\
$\widehat{P}_0,\widehat{a},\widehat{b}$ & Estimated fringe parameters from least squares. \\
$\widehat{R}$ & Fitted local fringe amplitude $\sqrt{\widehat{a}^2+\widehat{b}^2}$. \\
$\mathcal{W}_i$ & Local fitting window for shot $i$. \\
$\rho_i$ & Observed population ratio for shot $i$. \\
$\phi_i^{\mathrm{RT}}$ & Real-time classical phase estimate for shot $i$. \\
$\phi_i^{\mathrm{AI}}$ & Estimated atomic phase for shot $i$. \\
$\delta\phi_i$ & Wrapped residual phase target for shot $i$. \\
$\mathbf{e}_k$ & Embedded representation of the $k$-th temporal patch. \\
$d$ & Latent embedding dimension. \\
$\mathbf{H}^{(\ell)}$ & Hidden representation at BAR block $\ell$. \\
$\mathbf{B}^{(j)}$ & Summary representation from the $j$-th preceding BAR block. \\
$\alpha_{\ell,j}$ & Residual attention weight from block $j$ to block $\ell$. \\
$\mathbf{R}^{(\ell)}$ & Block attention residual aggregation at depth $\ell$. \\
$\mathbf{F}$ & Hierarchically fused latent feature. \\
$\mathbf{z}$ & Projected latent feature before quantum feature mapping. \\
$\mathbf{m}_k$ & Output of the $k$-th QFM branch. \\
$\mathbf{Q}$ & Final quantum-enhanced latent representation. \\
$K$ & Number of quantum feature mapping branches/heads. \\
\bottomrule
\end{tabularx}
\end{table}

Table~\ref{tab:notation} summarizes the principal symbols and mathematical notations used throughout the BARFI-Q framework, including forecast variables, fringe fitting parameters, and internal latent representations. 
This notation table is provided to improve readability and ensure a consistent interpretation of the proposed model components, optimization terms, and quantum-enhanced feature mappings discussed in the subsequent sections.

\section{Problem Formulation}
\label{sec:problem_formulation}

Atom interferometric sensing under arbitrary orientations, accelerations, and rotational conditions produces heterogeneous multivariate temporal streams governed by phase accumulation, fringe dynamics, control actions, and auxiliary sensing variables \cite{decastanet2024atom}. Since these streams are temporally structured, mutually interacting, and nonstationary, the prediction task is formulated as multivariate time-series forecasting over heterogeneous sensing inputs \cite{zhou2021informer,wu2021autoformer,nie2023patchtst}.

Let the multivariate observation at time step $t$ be
\begin{equation}
\mathbf{x}_t \in \mathbb{R}^{M},
\label{eq:x_t}
\end{equation}
where $M$ denotes the number of observed variables. Given a historical look-back window of length $L$, the input sequence is
\begin{equation}
\mathbf{X}_{t-L+1:t}
=
[\mathbf{x}_{t-L+1}, \mathbf{x}_{t-L+2}, \ldots, \mathbf{x}_t]
\in \mathbb{R}^{L \times M}.
\label{eq:input_window}
\end{equation}
For each supervised sample, only observations from $t-L+1$ to $t$ are included in the input. The prediction target is defined at the future time step $t+H$, and no variables measured at $t+H$ or beyond are included in the input window. This construction avoids target leakage and ensures a forecasting formulation rather than same-window reconstruction.

Let $\phi_t^{\mathrm{AI}}$ denote the atom-interferometric phase estimate and $\phi_t^{\mathrm{RT}}$ denote the corresponding real-time classical phase estimate. The supervised target is the wrapped residual phase
\begin{equation}
\delta\phi_t
=
\mathrm{wrap}_{\pi}
\left(
\phi_t^{\mathrm{AI}}-\phi_t^{\mathrm{RT}}
\right),
\label{eq:wrapped_residual_phase_target}
\end{equation}
where $\mathrm{wrap}_{\pi}(\cdot)$ maps the residual phase to the principal interval $[-\pi,\pi]$.

Direct regression of angular quantities suffers from discontinuities near the wrapping boundary. Therefore, the wrapped residual phase is represented in circular form as
\begin{equation}
y_t^{(1)} = \cos(\delta\phi_t),
\label{eq:cos_delta_phi}
\end{equation}
\begin{equation}
y_t^{(2)} = \sin(\delta\phi_t),
\label{eq:sin_delta_phi}
\end{equation}
with target vector
\begin{equation}
\mathbf{y}_t
=
\begin{bmatrix}
\cos(\delta\phi_t) \\
\sin(\delta\phi_t)
\end{bmatrix}
\in \mathbb{R}^{2}.
\label{eq:circular_target}
\end{equation}

For a forecasting horizon $H$, the future circular target sequence is
\begin{equation}
\mathbf{Y}_{t+1:t+H}
=
[\mathbf{y}_{t+1}, \mathbf{y}_{t+2}, \ldots, \mathbf{y}_{t+H}]
\in \mathbb{R}^{H \times 2}.
\label{eq:future_target}
\end{equation}
The forecasting model is defined as
\begin{equation}
f_{\theta} : \mathbb{R}^{L \times M} \rightarrow \mathbb{R}^{H \times 2},
\label{eq:model_mapping}
\end{equation}
and predicts
\begin{equation}
\hat{\mathbf{Y}}_{t+1:t+H}
=
f_{\theta}\!\left(\mathbf{X}_{t-L+1:t}\right),
\label{eq:forecast_mapping}
\end{equation}
where $\hat{\mathbf{Y}}_{t+1:t+H}$ denotes the predicted future circular residual-phase sequence.

The formulation above is written for a general horizon $H$. The reported experiments adopt the one-step-ahead setting with $H=1$ and \texttt{pred\_len=1}. Thus, each historical window is mapped to the next-step circular residual-phase target:
\begin{equation}
\hat{\mathbf{y}}_{t+1}
=
f_{\theta}\!\left(\mathbf{X}_{t-L+1:t}\right)
\in \mathbb{R}^{2}.
\label{eq:single_step_forecast}
\end{equation}

The predicted wrapped residual phase is reconstructed as
\begin{equation}
\widehat{\delta\phi}_{t+1}
=
\mathrm{atan2}\!\left(
\hat{y}_{t+1,2},
\hat{y}_{t+1,1}
\right),
\label{eq:phase_reconstruction}
\end{equation}
where $\hat{y}_{t+1,1}$ and $\hat{y}_{t+1,2}$ denote the predicted cosine and sine components, respectively.

Given $N$ training samples, the general optimization problem is
\begin{equation}
\theta^{*}
=
\arg\min_{\theta}
\frac{1}{N}
\sum_{i=1}^{N}
\frac{1}{H}
\sum_{\tau=1}^{H}
\ell\!\left(\hat{\mathbf{y}}_{i,\tau}, \mathbf{y}_{i,\tau}\right),
\label{eq:objective_general}
\end{equation}
where $\ell(\cdot,\cdot)$ denotes a regression loss in circular space. Under the reported one-step setting, the primary training loss is
\begin{equation}
\mathcal{L}_{\mathrm{MSE}}
=
\frac{1}{N}
\sum_{i=1}^{N}
\left\|
\hat{\mathbf{y}}_{i,t+1}
-
\mathbf{y}_{i,t+1}
\right\|_2^2.
\label{eq:mse_loss}
\end{equation}

Directional consistency on the unit circle is represented by the cosine-alignment regularizer
\begin{equation}
\mathcal{L}_{\mathrm{cos}}
=
\frac{1}{N}
\sum_{i=1}^{N}
\left(
1-
\frac{
\hat{\mathbf{y}}_{i,t+1}^{\top}\mathbf{y}_{i,t+1}
}{
\|\hat{\mathbf{y}}_{i,t+1}\|_2 \, \|\mathbf{y}_{i,t+1}\|_2+\epsilon
}
\right),
\label{eq:cosine_loss}
\end{equation}
where $\epsilon$ is a small numerical constant. The total objective is
\begin{equation}
\mathcal{L}_{\mathrm{total}}
=
\mathcal{L}_{\mathrm{MSE}}
+
\lambda \mathcal{L}_{\mathrm{cos}},
\label{eq:total_loss}
\end{equation}
where $\lambda\geq0$ controls the contribution of the cosine term. Setting $\lambda=0$ recovers the pure circular-space MSE objective.

Prediction quality is evaluated using the wrapped residual-phase error
\begin{equation}
e_{t+1}
=
\mathrm{atan2}
\!\left(
\sin(\widehat{\delta\phi}_{t+1}-\delta\phi_{t+1}),
\cos(\widehat{\delta\phi}_{t+1}-\delta\phi_{t+1})
\right),
\label{eq:wrapped_phase_error}
\end{equation}
which maps the angular discrepancy to $[-\pi,\pi]$. The reported metrics are MSE, MAE, and RMSE on the wrapped angular error across repeated runs and historical input window lengths.

In summary, the objective is to learn a fusion-driven mapping from strictly historical heterogeneous sensing windows to the next-step wrapped residual phase in circular space. This formulation motivates the BARFI-Q architecture, which integrates temporal modeling, hierarchical fusion, adaptive residual reuse, and quantum-enhanced latent representation learning.

\section{Methodology}
\label{sec:methodology}

BARFI-Q is a hybrid quantum-classical predictive information fusion framework for multivariate atom-interferometric time series. As illustrated in Fig.~\ref{fig:barfi_q_architecture}, the proposed model consists of four main stages: i) patch-based input representation, ii) dual-branch temporal modeling with Block Attention Residual (BAR) Transformer blocks, iii) hierarchical feature fusion, and iv) a Quantum Feature Mapping (QFM) enhancement module followed by a forecasting head. The overall design is motivated by the need to jointly model long-range temporal dependence, multiscale local structure, cross-stream interactions, and nonlinear latent-feature transformations in heterogeneous sensing environments.

\begin{figure*}[t]
    \centering
    \includegraphics[width=\textwidth]{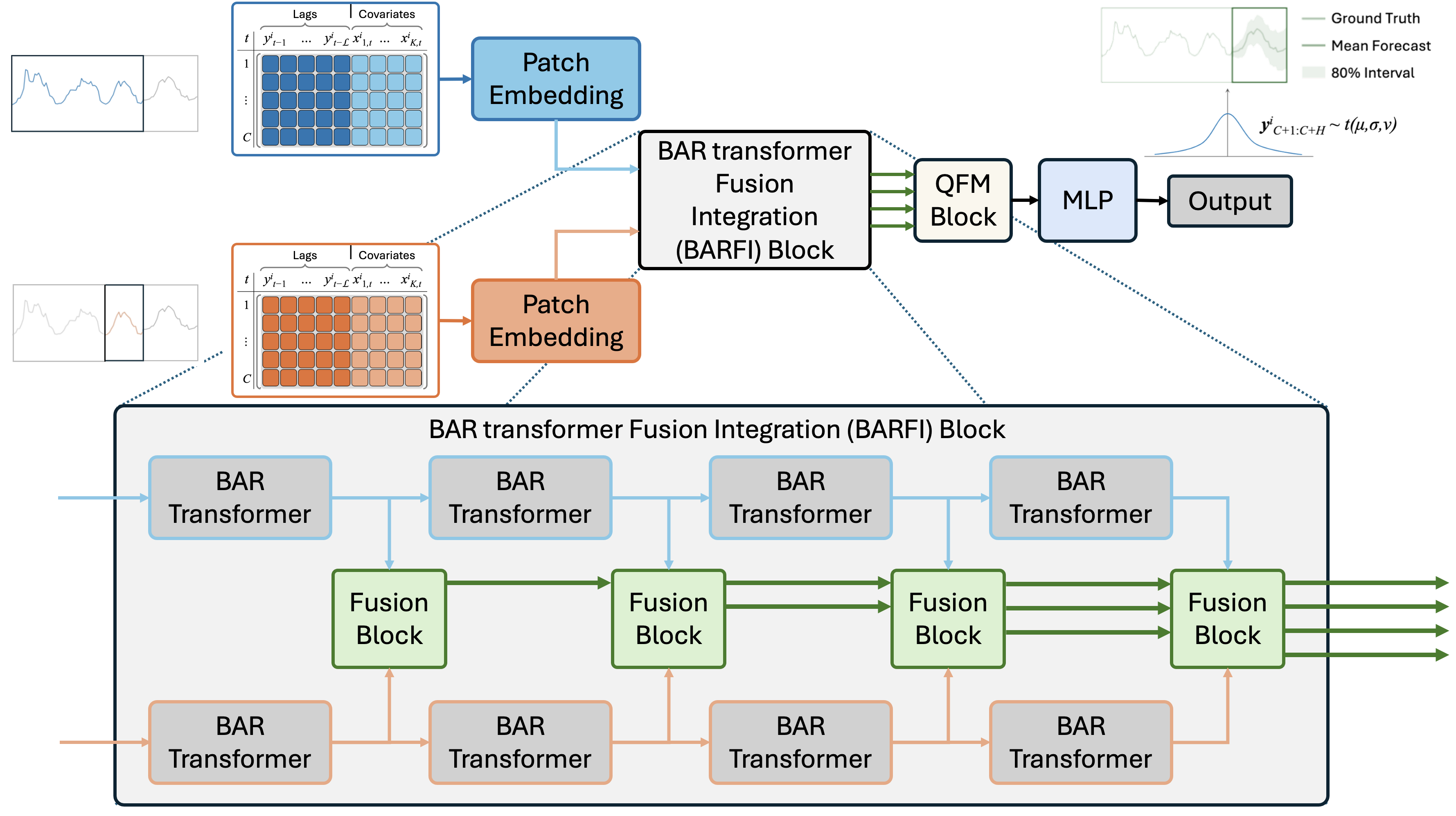}
    \caption{Overview of the proposed BARFI-Q architecture. The model consists of patch embedding, dual BAR Transformer branches, hierarchical fusion blocks, a quantum feature mapping module, and a forecasting head for future phase-related prediction.}
    \label{fig:barfi_q_architecture}
\end{figure*}

\subsection{Atomic Phase Reconstruction from Interferometric Fringes}
\label{sec:atomic_phase_reconstruction}

In a light-pulse atom interferometer, the population measured at an output port is a sinusoidal function of the total interferometric phase difference and can be written as \cite{Cronin2009,Berman1997,Kasevich1992,Peters1999}
\begin{equation}
P(\Delta\Phi)=P_{0}+A\cos(\Delta\Phi),
\label{eq:generic_fringe}
\end{equation}
where $P(\Delta\Phi)\in[0,1]$ is the measured population ratio, $P_{0}$ is the mean offset, and $A$ is the fringe amplitude. In our setting, the controlled scan variable is the applied Raman phase $\theta$, while the interferometric phase is $\phi_{\mathrm{AI}}$. Hence, the total phase is
\begin{equation}
\Delta\Phi=\theta+\phi_{\mathrm{AI}}.
\label{eq:total_phase}
\end{equation}
Using the parameterization $A=2C$, Eq.~\eqref{eq:generic_fringe} becomes
\begin{equation}
P(\theta)=P_{0}+2C\cos\!\bigl(\theta+\phi_{\mathrm{AI}}\bigr),
\label{eq:fringe_model}
\end{equation}
where $C$ denotes the fringe contrast.

Since $\cos(\cdot)\in[-1,1]$, the maximum and minimum population values satisfy
\begin{equation}
P_{\max}=P_{0}+2C,
\qquad
P_{\min}=P_{0}-2C,
\label{eq:extrema}
\end{equation}
which yields
\begin{equation}
P_{0}=\frac{P_{\max}+P_{\min}}{2},
\qquad
C=\frac{P_{\max}-P_{\min}}{4}.
\label{eq:P0_C_from_extrema}
\end{equation}
Although Eq.~\eqref{eq:P0_C_from_extrema} provides an intuitive interpretation of the fringe parameters, direct inversion from a single measurement is ambiguous due to cosine symmetry.

To obtain a robust local estimate, we linearize Eq.~\eqref{eq:fringe_model} as
\begin{equation}
P(\theta)=P_{0}+a\cos\theta+b\sin\theta,
\label{eq:linear_regression_fringe}
\end{equation}
with
\begin{equation}
a=2C\cos\phi_{\mathrm{AI}},
\qquad
b=-2C\sin\phi_{\mathrm{AI}}.
\label{eq:ab_mapping}
\end{equation}
For a local fitting window containing $N$ samples $\{(\theta_i,P_i)\}_{i=1}^{N}$, define
\begin{equation}
\mathbf{X}=
\begin{bmatrix}
1 & \cos\theta_{1} & \sin\theta_{1}\\
1 & \cos\theta_{2} & \sin\theta_{2}\\
\vdots & \vdots & \vdots\\
1 & \cos\theta_{N} & \sin\theta_{N}
\end{bmatrix},
\qquad
\mathbf{p}=
\begin{bmatrix}
P_{1}\\
P_{2}\\
\vdots\\
P_{N}
\end{bmatrix}.
\label{eq:design_matrix}
\end{equation}
The least-squares estimate is then applied.
\begin{equation}
\widehat{\boldsymbol{\beta}}
=
\begin{bmatrix}
\widehat{P}_{0}\\
\widehat{a}\\
\widehat{b}
\end{bmatrix}
=
(\mathbf{X}^{\mathsf{T}}\mathbf{X})^{-1}\mathbf{X}^{\mathsf{T}}\mathbf{p}.
\label{eq:ols_solution}
\end{equation}
From the fitted coefficients, the local fringe amplitude and phase are recovered as
\begin{equation}
\widehat{R}=\sqrt{\widehat{a}^{2}+\widehat{b}^{2}},
\label{eq:R_from_ab}
\end{equation}
\begin{equation}
\widehat{C}=\frac{1}{2}\widehat{R},
\label{eq:C_from_ab}
\end{equation}
and
\begin{equation}
\widehat{\phi}_{\mathrm{AI}}=\mathrm{atan2}\!\left(-\widehat{b},\,\widehat{a}\right).
\label{eq:phi_from_ab}
\end{equation}

However, in the proposed pipeline, the atomic phase estimation is performed \emph{per shot} using a local neighborhood rather than a single global fit. For each shot $i$, a local window $\mathcal{W}_i$ is constructed from neighboring samples with a consistent scan identity. Within this window, Eq.~\eqref{eq:ols_solution} is used to estimate $\widehat{P}_0$, $\widehat{a}$, and $\widehat{b}$, from which the fitted amplitude $\widehat{R}$ is obtained. The current observation $\rho_i$ is then inverted using the fitted fringe model:
\begin{equation}
u_i=\mathrm{clip}\!\left(\frac{\rho_i-\widehat{P}_0}{\widehat{R}},-1,1\right),
\label{eq:ui_def}
\end{equation}
\begin{equation}
d_i=\arccos(u_i).
\label{eq:di_def}
\end{equation}
Because cosine is an even function, this inversion yields two candidate atomic phases:
\begin{equation}
\phi_i^{(1)}=\mathrm{wrap}_{\pi}(d_i-\theta_i),
\qquad
\phi_i^{(2)}=\mathrm{wrap}_{\pi}(-d_i-\theta_i).
\label{eq:phase_candidates}
\end{equation}
To resolve this ambiguity, the real-time classical phase estimate $\phi_i^{\mathrm{RT}}$ is used as a circular prior, and the candidate closest to $\phi_i^{\mathrm{RT}}$ on the unit circle is selected:
\begin{equation}
\phi_i^{\mathrm{AI}}
=
\arg\min_{\phi\in\{\phi_i^{(1)},\phi_i^{(2)}\}}
\left|
\mathrm{wrap}_{\pi}\!\left(\phi-\phi_i^{\mathrm{RT}}\right)
\right|.
\label{eq:phase_selection}
\end{equation}

Finally, the supervised target is defined as the wrapped residual phase error
\begin{equation}
\delta\phi_i
=
\mathrm{wrap}_{\pi}\!\left(\phi_i^{\mathrm{AI}}-\phi_i^{\mathrm{RT}}\right),
\label{eq:residual_target}
\end{equation}
where $\mathrm{wrap}_{\pi}(\cdot)$ maps an angle to $(-\pi,\pi]$. This residual is used as the phase-related target for learning and evaluation. The complete procedure is summarized in Algorithm~\ref{alg:atomic_residual_compact}.

\begin{algorithm}[t]
\caption{Atomic phase and residual computation}
\label{alg:atomic_residual_compact}
\footnotesize
\begin{algorithmic}[1]
\Require Samples $\{(\mathrm{iter}_i,\theta_i,\rho_i,\phi^{\mathrm{RT}}_i,a_i,c_i,r_i)\}_{i=1}^{N}$
\Require Window size $W$, minimum points $m$, reset threshold $\Delta_{\mathrm{reset}}$, tolerance $\varepsilon>0$
\Ensure Atomic phase $\{\phi_i^{\mathrm{AI}}\}_{i=1}^{N}$ and residual phase $\{\delta\phi_i\}_{i=1}^{N}$
\State Define $\mathrm{wrap}_{\pi}(x)=((x+\pi)\bmod 2\pi)-\pi$
\For{$i=1$ to $N$}
    \State Build a local window $\mathcal{W}_i$ with consistent scan identity and neighboring samples
    \If{$|\mathcal{W}_i|<m$}
        \State mark output as missing and continue
    \EndIf
    \State Fit $(\widehat{P}_0,\widehat{A},\widehat{B})$ by least squares using Eq.~\eqref{eq:ols_solution}
    \State $\widehat{R}\gets \sqrt{\widehat{A}^{2}+\widehat{B}^{2}}$
    \If{$\widehat{R}\le\varepsilon$}
        \State mark output as missing and continue
    \EndIf
    \State $u_i\gets \clip((\rho_i-\widehat{P}_0)/\widehat{R},-1,1)$
    \State $d_i\gets \arccos(u_i)$
    \State Compute the two phase candidates induced by cosine symmetry
    \State Select the candidate closest to $\phi_i^{\mathrm{RT}}$ on the circle
    \State $\delta\phi_i\gets \mathrm{wrap}_{\pi}(\phi_i^{\mathrm{AI}}-\phi_i^{\mathrm{RT}})$
\EndFor
\end{algorithmic}
\end{algorithm}

\subsection{Patch-Based Input Representation}
\label{sec:patch_representation}

Given a historical multivariate sequence $\mathbf{X}_{t-L+1:t}\in\mathbb{R}^{L\times M}$, BARFI-Q partitions the input into fixed-length temporal patches and projects them into a latent space. Patch tokenization improves temporal abstraction and computational efficiency, and has shown strong effectiveness in long-horizon forecasting \cite{nie2023patchtst}. For the $k$-th patch $\mathbf{p}_k\in\mathbb{R}^{P\times M}$, the embedding layer produces
\begin{equation}
\mathbf{e}_k = \mathrm{Embed}(\mathbf{p}_k) \in \mathbb{R}^{d},
\label{eq:patch_embed}
\end{equation}
where $d$ is the latent embedding dimension. The resulting token sequence is then processed by two parallel temporal branches.

\subsection{Dual-Branch BAR Transformer}
\label{sec:bar_transformer}

\begin{figure*}[t]
    \centering
    \includegraphics[width=\textwidth]{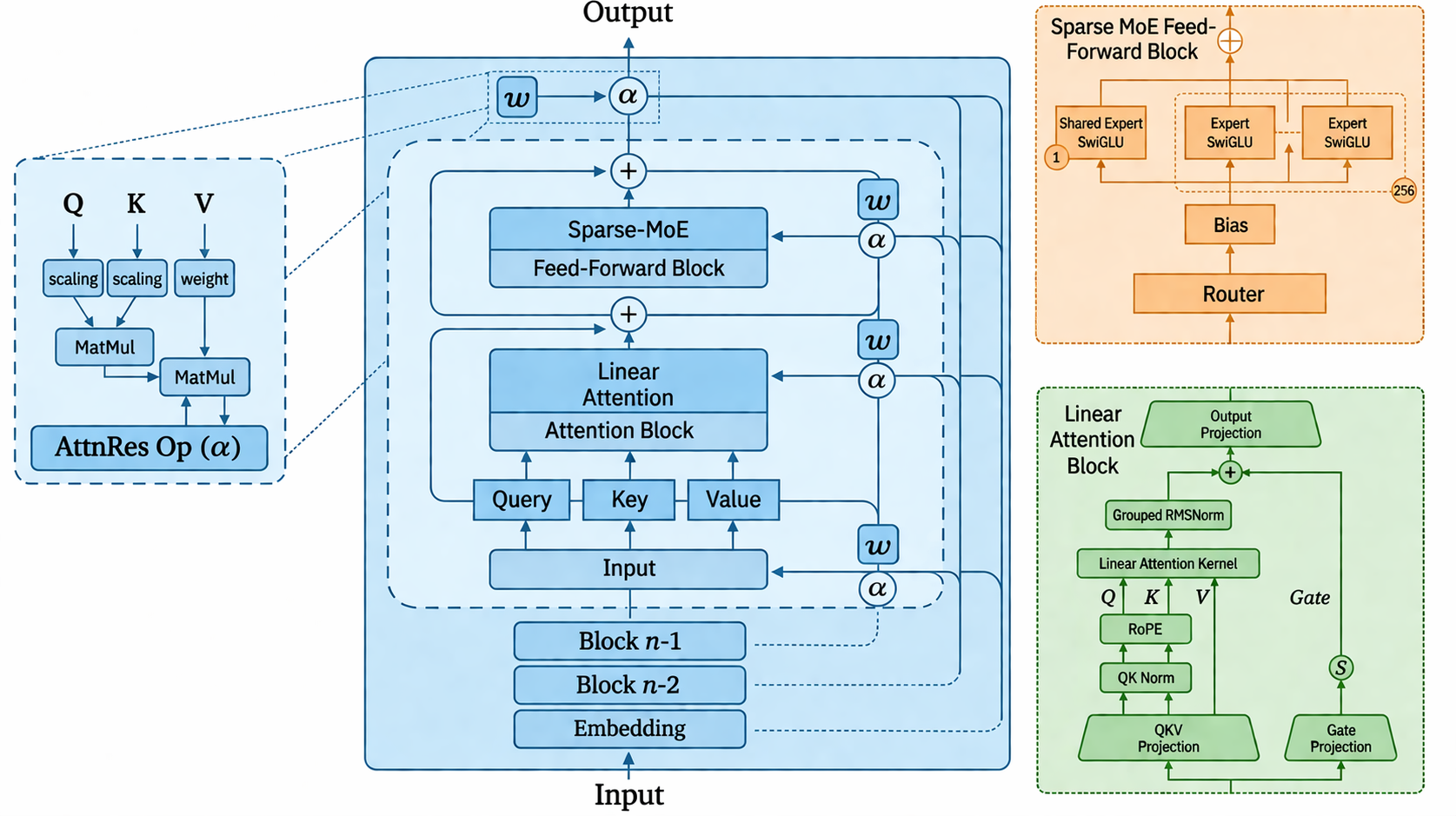}
    \caption{Structure of the proposed Dual-Branch BAR Transformer module. Each branch contains a normalized temporal attention pathway, rotary temporal encoding, a linear attention kernel, adaptive Block Attention Residual (BAR) aggregation, and a sparse mixture-of-experts feed-forward block. Unlike standard Transformer layers that reuse only the immediately preceding hidden state through fixed additive residuals, BAR adaptively retrieves and reuses informative summaries from all preceding blocks.}
    \label{fig:barfi_q_barformer}
\end{figure*}

To capture heterogeneous temporal dynamics in atom-interferometric sensing streams, BARFI-Q employs two complementary Transformer branches. The first branch emphasizes short-range temporal fluctuations and local phase variations, while the second branch captures longer-range dependencies and cross-window contextual evolution. Both branches share the same BAR Transformer design but operate as independent temporal representation learners before hierarchical fusion. This dual-branch formulation is important because atom-interferometric signals are governed simultaneously by local fringe perturbations, slow phase drift, control-dependent variations, and correlated auxiliary sensing streams.

Let $\mathbf{H}^{(\ell)}\in\mathbb{R}^{N\times d}$ denote the input token representation at block $\ell$, where $N$ is the number of temporal tokens and $d$ is the hidden dimension. For each branch, the block first computes query, key, and value projections:
\begin{equation}
\mathbf{Q}^{(\ell)}=\mathbf{H}^{(\ell)}\mathbf{W}_{Q}^{(\ell)},\qquad
\mathbf{K}^{(\ell)}=\mathbf{H}^{(\ell)}\mathbf{W}_{K}^{(\ell)},\qquad
\mathbf{V}^{(\ell)}=\mathbf{H}^{(\ell)}\mathbf{W}_{V}^{(\ell)}.
\label{eq:qkv_projection_bar}
\end{equation}
To stabilize attention scores, we apply query-key normalization before the attention kernel. This avoids uncontrolled growth of the dot-product magnitude and reduces premature attention saturation:
\begin{equation}
\bar{\mathbf{Q}}_{i}^{(\ell)}
=
\gamma_Q
\frac{\mathbf{Q}_{i}^{(\ell)}}{\|\mathbf{Q}_{i}^{(\ell)}\|_2+\epsilon},
\qquad
\bar{\mathbf{K}}_{i}^{(\ell)}
=
\gamma_K
\frac{\mathbf{K}_{i}^{(\ell)}}{\|\mathbf{K}_{i}^{(\ell)}\|_2+\epsilon},
\label{eq:qk_norm_bar}
\end{equation}
where $\gamma_Q$ and $\gamma_K$ are learnable scaling parameters and $\epsilon$ is a small numerical constant. This design follows the motivation of QK normalization, which improves the numerical behavior of attention by normalizing query and key vectors before score computation \cite{henry2020query}.

Since temporal ordering is essential for forecasting, rotary positional encoding is applied to the normalized queries and keys. For token index $i$, RoPE rotates each two-dimensional feature pair by a position-dependent angle:
\begin{equation}
\widetilde{\mathbf{Q}}_{i}^{(\ell)}
=
\mathbf{R}_{i}\bar{\mathbf{Q}}_{i}^{(\ell)},
\qquad
\widetilde{\mathbf{K}}_{i}^{(\ell)}
=
\mathbf{R}_{i}\bar{\mathbf{K}}_{i}^{(\ell)},
\label{eq:rope_bar}
\end{equation}
where $\mathbf{R}_{i}$ is a block-diagonal rotation matrix. For the $m$-th feature pair, the rotation is
\begin{equation}
\mathbf{R}_{i,m}
=
\begin{bmatrix}
\cos(i\theta_m) & -\sin(i\theta_m)\\
\sin(i\theta_m) & \cos(i\theta_m)
\end{bmatrix},
\qquad
\theta_m = 10000^{-2m/d}.
\label{eq:rope_rotation_bar}
\end{equation}
RoPE is used because it injects relative temporal displacement directly into the attention geometry, which is suitable for phase-related forecasting where the relative spacing between historical observations is important \cite{su2021roformer}.

Instead of standard quadratic softmax attention, the BAR Transformer uses a linear attention kernel to reduce the complexity for long temporal windows. Let $\phi(\cdot)$ denote a positive kernel feature map, such as $\phi(\mathbf{x})=\mathrm{ELU}(\mathbf{x})+1$. The linear attention output is computed as
\begin{equation}
\mathbf{A}^{(\ell)}
=
\frac{
\phi(\widetilde{\mathbf{Q}}^{(\ell)})
\left[
\phi(\widetilde{\mathbf{K}}^{(\ell)})^{\top}
\mathbf{V}^{(\ell)}
\right]
}{
\phi(\widetilde{\mathbf{Q}}^{(\ell)})
\left[
\phi(\widetilde{\mathbf{K}}^{(\ell)})^{\top}
\mathbf{1}_{N}
\right]
+\epsilon
},
\label{eq:linear_attention_bar}
\end{equation}
where the denominator is applied row-wise and $\mathbf{1}_{N}$ is an all-one vector. This kernelized formulation reduces attention computation from $\mathcal{O}(N^2)$ to approximately $\mathcal{O}(N)$ with respect to sequence length, making the model more suitable for long multivariate forecasting windows \cite{katharopoulos2020transformers}.

The attention output is then added to the input through a local residual pathway:
\begin{equation}
\mathbf{Z}^{(\ell)}
=
\mathbf{H}^{(\ell)}
+
\mathrm{Dropout}\!\left(\mathbf{A}^{(\ell)}\mathbf{W}_{O}^{(\ell)}\right).
\label{eq:local_attention_residual_bar}
\end{equation}

\subsubsection{Block Attention Residual Aggregation}
\label{sec:bar_aggregation_rewrite}

The central novelty of the BAR Transformer is its adaptive cross-depth residual reuse. Standard Transformer blocks use a fixed one-step residual shortcut from $\mathbf{H}^{(\ell)}$ to $\mathbf{H}^{(\ell+1)}$ \cite{vaswani2017attention}. Although effective for optimization, such residual propagation does not explicitly determine which earlier block representations remain useful for the current prediction. In contrast, BAR stores summaries from previous blocks and retrieves them using learned compatibility scores.

Let $\mathbf{B}^{(j)}\in\mathbb{R}^{d}$ denote the summary representation of the $j$-th preceding block:
\begin{equation}
\mathbf{B}^{(j)}
=
\mathrm{Pool}\!\left(\mathbf{H}^{(j)}\right),
\qquad j=0,1,\ldots,\ell-1,
\label{eq:block_summary_bar}
\end{equation}
where $\mathrm{Pool}(\cdot)$ may be average pooling, attention pooling, or the special classification/summary token. For the current block, a query summary is computed as
\begin{equation}
\mathbf{c}^{(\ell)}
=
\mathrm{Pool}\!\left(\mathbf{Z}^{(\ell)}\right).
\label{eq:current_block_summary_bar}
\end{equation}
The compatibility between the current block and the $j$-th historical block is given by
\begin{equation}
s_{\ell,j}
=
\frac{
\left(\mathbf{W}_{r}^{q}\mathbf{c}^{(\ell)}\right)^{\top}
\left(\mathbf{W}_{r}^{k}\mathbf{B}^{(j)}\right)
}{
\sqrt{d_r}
},
\label{eq:bar_score}
\end{equation}
where $\mathbf{W}_{r}^{q}$ and $\mathbf{W}_{r}^{k}$ are learnable residual-routing projections and $d_r$ is the residual-routing dimension. The normalized BAR coefficient is
\begin{equation}
\alpha_{\ell,j}
=
\frac{\exp(s_{\ell,j})}
{\sum_{k=0}^{\ell-1}\exp(s_{\ell,k})},
\qquad
j=0,1,\ldots,\ell-1.
\label{eq:bar_alpha_rewrite}
\end{equation}
The retrieved residual vector is then
\begin{equation}
\mathbf{r}^{(\ell)}
=
\sum_{j=0}^{\ell-1}
\alpha_{\ell,j}
\mathbf{W}_{r}^{v}\mathbf{B}^{(j)},
\label{eq:bar_residual_rewrite}
\end{equation}
where $\mathbf{W}_{r}^{v}$ projects the retrieved block summaries into the hidden dimension. The vector $\mathbf{r}^{(\ell)}$ is broadcast across tokens and added to the token-level representation:
\begin{equation}
\mathbf{U}^{(\ell)}
=
\mathbf{Z}^{(\ell)}
+
\mathbf{1}_{N}\left(\mathbf{r}^{(\ell)}\right)^{\top}.
\label{eq:bar_broadcast_update}
\end{equation}
This mechanism allows the current layer to selectively retrieve informative earlier representations rather than relying only on the immediate previous hidden state. Therefore, BAR mitigates signal dilution, improves long-depth information flow, and allows different layers to specialize in different temporal structures.

\subsubsection{Sparse-MoE Feed-Forward Block with Expert SwiGLU}
\label{sec:sparse_moe_swiglu}

After BAR aggregation, each token is processed by a sparse mixture-of-experts feed-forward block. This block increases representational capacity without activating all feed-forward parameters for every token. Given the normalized representation
\begin{equation}
\mathbf{x}_{n}^{(\ell)}
=
\mathrm{LN}\!\left(\mathbf{U}_{n}^{(\ell)}\right),
\label{eq:moe_input}
\end{equation}
the router produces an expert-selection distribution:
\begin{equation}
\mathbf{p}_{n}^{(\ell)}
=
\mathrm{softmax}
\left(
\mathbf{x}_{n}^{(\ell)}\mathbf{W}_{\mathrm{router}}^{(\ell)}
\right),
\qquad
\mathbf{p}_{n}^{(\ell)}\in\mathbb{R}^{E},
\label{eq:router_distribution}
\end{equation}
where $E$ is the number of experts. Only the top-$k$ experts are activated:
\begin{equation}
\mathcal{T}_{n}^{(\ell)}
=
\mathrm{TopK}
\left(
\mathbf{p}_{n}^{(\ell)},k
\right).
\label{eq:topk_router}
\end{equation}
The normalized routing weight for expert $e$ is
\begin{equation}
\pi_{n,e}^{(\ell)}
=
\frac{
p_{n,e}^{(\ell)}
}{
\sum_{u\in\mathcal{T}_{n}^{(\ell)}}p_{n,u}^{(\ell)}+\epsilon
},
\qquad e\in\mathcal{T}_{n}^{(\ell)}.
\label{eq:router_weight_normalized}
\end{equation}

Each expert is implemented using a SwiGLU feed-forward transformation:
\begin{equation}
\mathrm{Expert}_{e}(\mathbf{x})
=
\left[
\mathrm{SiLU}\!\left(\mathbf{x}\mathbf{W}_{e}^{u}\right)
\odot
\left(\mathbf{x}\mathbf{W}_{e}^{v}\right)
\right]
\mathbf{W}_{e}^{o},
\label{eq:expert_swiglu}
\end{equation}
where $\odot$ denotes element-wise multiplication. The sparse-MoE output for token $n$ is
\begin{equation}
\mathbf{m}_{n}^{(\ell)}
=
\sum_{e\in\mathcal{T}_{n}^{(\ell)}}
\pi_{n,e}^{(\ell)}
\mathrm{Expert}_{e}
\left(
\mathbf{x}_{n}^{(\ell)}
\right).
\label{eq:moe_output}
\end{equation}
The use of a router and sparsely activated experts follows the principle of conditional computation in sparse MoE Transformers \cite{lepikhin2020gshard,fedus2022switch}, while SwiGLU improves the expressiveness of the feed-forward transformation through multiplicative gating \cite{shazeer2020glu}.

Finally, a learnable gate controls how strongly the expert transformation contributes to the BAR-enhanced hidden representation:
\begin{equation}
\mathbf{G}^{(\ell)}
=
\sigma
\left(
\left[
\mathbf{U}^{(\ell)};
\mathbf{M}^{(\ell)}
\right]
\mathbf{W}_{g}^{(\ell)}
+
\mathbf{b}_{g}^{(\ell)}
\right),
\label{eq:bar_gate}
\end{equation}
where $\mathbf{M}^{(\ell)}=[\mathbf{m}_{1}^{(\ell)},\ldots,\mathbf{m}_{N}^{(\ell)}]^{\top}$ and $\sigma(\cdot)$ is the sigmoid function. The final block output is
\begin{equation}
\mathbf{H}^{(\ell+1)}
=
\mathbf{U}^{(\ell)}
+
\mathbf{G}^{(\ell)}
\odot
\mathbf{M}^{(\ell)}.
\label{eq:bar_final_update}
\end{equation}
Thus, the gate prevents the sparse expert block from dominating the residual pathway and allows the model to adaptively balance attention-derived temporal context, BAR-retrieved historical features, and expert-specific nonlinear transformations.

\subsubsection{Branch Output}
\label{sec:branch_output}

For branch $b\in\{1,2\}$ with $L_b$ BAR Transformer blocks, the branch output is
\begin{equation}
\mathbf{H}_{b}
=
\mathrm{BARBranch}_{b}
\left(
\mathbf{E}
\right),
\qquad b\in\{1,2\},
\label{eq:bar_branch_output}
\end{equation}
where $\mathbf{E}$ is the embedded temporal token sequence. The two outputs are then passed to the hierarchical fusion module:
\begin{equation}
\mathbf{F}_{0}
=
\mathrm{Linear}_{f}
\left(
[
\mathbf{H}_{1};
\mathbf{H}_{2}
]
\right),
\label{eq:branch_concat_fusion}
\end{equation}
where $[\cdot;\cdot]$ denotes channel-wise concatenation. This fusion stage enables the model to integrate complementary temporal abstractions before QFM enhancement and final forecasting.

\subsection{Mathematical Properties of BAR Aggregation}
\label{sec:bar_math_properties}

Next, we formalize the main stability properties of the proposed BAR mechanism.

\begin{definition}[Block Attention Residual Aggregation]
\label{def:bar_aggregation_main}
Let $\{\mathbf{v}^{(j)}\}_{j=0}^{\ell-1}\subset\mathbb{R}^{d}$ denote the summary representations of the preceding blocks. Let the aggregation coefficients be defined by
\begin{equation}
\alpha_{\ell,j}
=
\frac{\exp(s_{\ell,j})}{\sum_{k=0}^{\ell-1}\exp(s_{\ell,k})},
\qquad j=0,1,\dots,\ell-1,
\label{eq:softmax_alpha_main}
\end{equation}
where $s_{\ell,j}\in\mathbb{R}$ is the compatibility score between the current block and the $j$-th preceding block. The BAR operator at depth $\ell$ is defined as
\begin{equation}
\mathbf{r}^{(\ell)}
=
\sum_{j=0}^{\ell-1}\alpha_{\ell,j}\mathbf{v}^{(j)}.
\label{eq:bar_residual_definition_main}
\end{equation}
\end{definition}

The above results show that the BAR residual term is a convex and bounded aggregation of preceding block summaries under the stated boundedness assumption. Importantly, this result applies to the aggregated BAR residual component $\mathbf{r}^{(\ell)}$ itself and should not be interpreted as an unconditional bound on the entire hidden-state update. The full block output may also depend on attention transformations, feed-forward/expert outputs, normalization layers, and gating operations. Therefore, BAR should be understood as a mechanism that constrains the cross-depth residual retrieval pathway and reduces the risk of residual-path amplification, rather than as a standalone guarantee that all hidden states remain globally bounded.

\begin{lemma}[Softmax Positivity and Partition of Unity]
\label{lem:softmax_partition_main}
For the coefficients in Eq.~\eqref{eq:softmax_alpha_main}, we have
\begin{equation}
\alpha_{\ell,j}>0,
\qquad
\sum_{j=0}^{\ell-1}\alpha_{\ell,j}=1.
\label{eq:softmax_partition_main}
\end{equation}
\end{lemma}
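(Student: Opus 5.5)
The plan is a direct calculation from the definition of the coefficients in Eq.~\eqref{eq:softmax_alpha_main}. Throughout, take $\ell\ge 1$, so that the index set $\{0,1,\dots,\ell-1\}$ is nonempty. The scores $s_{\ell,k}$ are finite real numbers. I would first record the one analytic fact needed: the exponential function maps $\mathbb{R}$ into $(0,\infty)$. Hence each term $\exp(s_{\ell,k})$ is strictly positive. Consequently the normalizer
\begin{equation*}
Z_\ell=\sum_{k=0}^{\ell-1}\exp(s_{\ell,k})
\end{equation*}
is a finite sum of at least one strictly positive term. So $Z_\ell>0$, and in particular $Z_\ell\neq 0$, which means the coefficients are well defined.

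For positivity, each $\alpha_{\ell,j}=\exp(s_{\ell,j})/Z_\ell$ is a quotient of two strictly positive reals and is therefore strictly positive. For the partition of unity, I would sum the definition over $j$. The denominator $Z_\ell$ does not depend on $j$, so it factors out:
\begin{equation*}
\sum_{j=0}^{\ell-1}\alpha_{\ell,j}=\frac{1}{Z_\ell}\sum_{j=0}^{\ell-1}\exp(s_{\ell,j})=\frac{Z_\ell}{Z_\ell}=1.
\end{equation*}
Here I only need to note that the summation index $k$ in $Z_\ell$ is a dummy variable, so the numerator sum coincides with $Z_\ell$.

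There is no genuine obstacle. The only points to state explicitly are that $\ell\ge1$, so the sum is nonempty and $Z_\ell>0$, and that the scores are finite. A score of $-\infty$ would give a zero weight and destroy strict positivity. I would also remark that the argument uses nothing about the form of $s_{\ell,j}$ in Eq.~\eqref{eq:bar_score}; it holds for arbitrary real compatibility scores. As an optional aside, one could mention the shift invariance $s_{\ell,j}\mapsto s_{\ell,j}-\max_k s_{\ell,k}$, which leaves every $\alpha_{\ell,j}$ unchanged. This shows the same conclusion holds for the numerically stabilized implementation. The lemma is then ready to feed the convexity and boundedness statements about $\mathbf{r}^{(\ell)}$ referred to above.
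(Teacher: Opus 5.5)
Your proposal is correct and follows the paper's proof: positivity comes from $\exp(\cdot)>0$, and the partition of unity comes from summing over $j$ with the $j$-independent normalizer factored out. Your explicit checks that $\ell\ge 1$ (so the normalizer is a nonempty positive sum) and that the scores are finite are sensible details that the paper leaves implicit.
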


\begin{proof}
Since $\exp(s_{\ell,j})>0$ for every $j$, each $\alpha_{\ell,j}$ is strictly positive. Summing Eq.~\eqref{eq:softmax_alpha_main} over $j$ gives $\sum_{j=0}^{\ell-1}\alpha_{\ell,j}=1$.
\end{proof}
\begin{theorem}[Component-wise Convexity and Boundedness of BAR Aggregation]
\label{thm:convex_boundedness_main}
Assume that each preceding block summary satisfies
\begin{equation}
\|\mathbf{v}^{(j)}\|_2 \le M,
\qquad j=0,1,\dots,\ell-1,
\label{eq:bounded_block_vectors_main}
\end{equation}
for some constant $M>0$. Then the BAR residual aggregation term
\begin{equation}
\mathbf{r}^{(\ell)}
=
\sum_{j=0}^{\ell-1}
\alpha_{\ell,j}\mathbf{v}^{(j)}
\label{eq:bar_residual_component_theorem}
\end{equation}
lies in the convex hull of $\{\mathbf{v}^{(j)}\}_{j=0}^{\ell-1}$ and satisfies
\begin{equation}
\|\mathbf{r}^{(\ell)}\|_2 \le M.
\label{eq:norm_bound_main}
\end{equation}
\end{theorem}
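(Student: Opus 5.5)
The plan is to derive both claims directly from Lemma~\ref{lem:softmax_partition_main}. No further machinery is needed, since $\mathbf{r}^{(\ell)}$ is a weighted sum of the summaries whose weights form a probability vector.

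\textbf{Convexity.} Recall that the convex hull of a finite set $\{\mathbf{v}^{(j)}\}_{j=0}^{\ell-1}$ consists of all combinations $\sum_j \lambda_j \mathbf{v}^{(j)}$ with $\lambda_j\ge 0$ and $\sum_j\lambda_j=1$. By Lemma~\ref{lem:softmax_partition_main}, the coefficients satisfy $\alpha_{\ell,j}>0$ and $\sum_{j=0}^{\ell-1}\alpha_{\ell,j}=1$. Taking $\lambda_j=\alpha_{\ell,j}$ shows that Eq.~\eqref{eq:bar_residual_component_theorem} has exactly this form, so $\mathbf{r}^{(\ell)}$ lies in the hull. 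Strict positivity even places it in the relative interior, although that is not required.

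\textbf{Norm bound.} Apply the triangle inequality and absolute homogeneity of $\|\cdot\|_2$, using $\alpha_{\ell,j}\ge 0$ so that $|\alpha_{\ell,j}|=\alpha_{\ell,j}$:
\begin{equation*}
\|\mathbf{r}^{(\ell)}\|_2 \le \sum_{j=0}^{\ell-1}\alpha_{\ell,j}\|\mathbf{v}^{(j)}\|_2 \le M\sum_{j=0}^{\ell-1}\alpha_{\ell,j} = M .
\end{equation*}
The second inequality uses assumption~\eqref{eq:bounded_block_vectors_main}, and the final equality is the partition-of-unity property. An equivalent route is to note that the closed ball of radius $M$ is convex and contains every $\mathbf{v}^{(j)}$, so it contains their convex hull, and hence contains $\mathbf{r}^{(\ell)}$ by the first claim.

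\textbf{Main obstacle.} There is no substantive obstacle here. The one point needing care is that the bound relies on nonnegativity of the weights: for arbitrary signed weights summing to one, the triangle inequality would only give $M\sum_j|\alpha_{\ell,j}|$, which can exceed $M$. Nonnegativity is exactly what the softmax supplies through Lemma~\ref{lem:softmax_partition_main}.

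I would also note that the argument covers the paper's routed form $\mathbf{v}^{(j)}=\mathbf{W}_r^{v}\mathbf{B}^{(j)}$. If instead only $\|\mathbf{B}^{(j)}\|_2\le M$ is assumed, the same argument gives the bound $\|\mathbf{W}_r^{v}\|_2 M$.
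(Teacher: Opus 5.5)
Your proposal is correct and follows essentially the same route as the paper: invoke Lemma~\ref{lem:softmax_partition_main} to get nonnegative weights summing to one, conclude that $\mathbf{r}^{(\ell)}$ is a convex combination, and then bound $\|\mathbf{r}^{(\ell)}\|_2$ by the triangle inequality together with the partition of unity. Your extra remarks (the relative-interior observation, the closed-ball argument, and the $\|\mathbf{W}_r^{v}\|_2 M$ bound for the routed form) are valid but go beyond what the paper's proof uses.
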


\begin{proof}
By Lemma~\ref{lem:softmax_partition_main}, the coefficients
$\alpha_{\ell,j}$ are nonnegative and sum to one. Therefore,
$\mathbf{r}^{(\ell)}$ is a convex combination of the preceding
block summaries $\{\mathbf{v}^{(j)}\}_{j=0}^{\ell-1}$ and hence lies
in their convex hull. Using the triangle inequality, we obtain
\begin{equation}
\|\mathbf{r}^{(\ell)}\|_2
=
\left\|
\sum_{j=0}^{\ell-1}
\alpha_{\ell,j}\mathbf{v}^{(j)}
\right\|_2
\le
\sum_{j=0}^{\ell-1}
\alpha_{\ell,j}
\|\mathbf{v}^{(j)}\|_2 .
\label{eq:convex_bound_step_main}
\end{equation}
Since $\|\mathbf{v}^{(j)}\|_2\le M$ for all $j$, we have
\begin{equation}
\|\mathbf{r}^{(\ell)}\|_2
\le
M
\sum_{j=0}^{\ell-1}
\alpha_{\ell,j}
=
M.
\label{eq:convex_bound_proof_main}
\end{equation}
This proves the stated bound.
\end{proof}

The above result shows that the BAR residual term is a convex and bounded aggregation of preceding block summaries under the stated boundedness assumption. Importantly, this result applies to the aggregated BAR residual component $\mathbf{r}^{(\ell)}$ itself and should not be interpreted as an unconditional bound on the entire hidden-state update. The full block output may also depend on attention transformations, feed-forward or expert outputs, normalization layers, and gating operations. Therefore, BAR should be understood as a mechanism that constrains the cross-depth residual retrieval pathway and reduces the risk of residual-path amplification, rather than as a standalone guarantee that all hidden states remain globally bounded.

\begin{lemma}[Limiting Reduction to Standard Residual Propagation]
\label{lem:standard_residual_main}
Let $j^\star=\ell-1$ denote the immediately preceding block. Suppose that the score gap between the immediately preceding block and every other preceding block grows without bound, i.e.,
\begin{equation}
s_{\ell,\ell-1}-s_{\ell,j}\rightarrow \infty,
\qquad
\forall j\neq \ell-1.
\label{eq:softmax_limiting_gap_main}
\end{equation}
Then the BAR aggregation approaches one-step residual propagation:
\begin{equation}
\lim_{\min_{j\neq \ell-1}(s_{\ell,\ell-1}-s_{\ell,j})\rightarrow\infty}
\mathbf{r}^{(\ell)}
=
\mathbf{v}^{(\ell-1)}.
\label{eq:standard_residual_limiting_recovered_main}
\end{equation}
\end{lemma}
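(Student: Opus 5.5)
To prove Lemma~\ref{lem:standard_residual_main}, I will rewrite the softmax coefficients relative to the score of the immediately preceding block. Dividing numerator and denominator of Eq.~\eqref{eq:softmax_alpha_main} by $\exp(s_{\ell,\ell-1})$ and setting $\Delta_j = s_{\ell,\ell-1}-s_{\ell,j}$ for $j\neq \ell-1$ gives
\begin{equation*}
\alpha_{\ell,\ell-1}=\frac{1}{1+\sum_{j\neq \ell-1}e^{-\Delta_j}},\qquad
\alpha_{\ell,j}=\frac{e^{-\Delta_j}}{1+\sum_{k\neq \ell-1}e^{-\Delta_k}}\le e^{-\Delta_j}.
\end{equation*}
With $\Delta:=\min_{j\neq\ell-1}\Delta_j$, each off-target weight is bounded by $e^{-\Delta}$. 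By Lemma~\ref{lem:softmax_partition_main} the weights sum to one, so $1-\alpha_{\ell,\ell-1}=\sum_{j\neq\ell-1}\alpha_{\ell,j}\le (\ell-1)e^{-\Delta}$.

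Next I control the deviation of the aggregate from $\mathbf{v}^{(\ell-1)}$. Using $\sum_j\alpha_{\ell,j}=1$ again,
\begin{equation*}
\mathbf{r}^{(\ell)}-\mathbf{v}^{(\ell-1)}=\sum_{j\neq\ell-1}\alpha_{\ell,j}\bigl(\mathbf{v}^{(j)}-\mathbf{v}^{(\ell-1)}\bigr).
\end{equation*}
The triangle inequality then yields
\begin{equation*}
\|\mathbf{r}^{(\ell)}-\mathbf{v}^{(\ell-1)}\|_2\le (\ell-1)e^{-\Delta}\max_{j}\|\mathbf{v}^{(j)}-\mathbf{v}^{(\ell-1)}\|_2 .
\end{equation*}
This bound tends to zero as $\Delta\to\infty$, which is exactly the limit in Eq.~\eqref{eq:standard_residual_limiting_recovered_main}.

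The main obstacle is making the meaning of the limit precise. The scores vary while the summaries $\mathbf{v}^{(j)}$ must stay controlled. I will treat the $\mathbf{v}^{(j)}$ as fixed, or more generally as uniformly bounded by $M$ as in Theorem~\ref{thm:convex_boundedness_main}. Under that bound the maximum above is at most $2M$, so the limit holds uniformly along any way the score gaps diverge. I will state this assumption explicitly, because without bounded summaries the convergence could fail. Finally, I will note that since $\ell$ is finite, the number of terms $\ell-1$ is a fixed constant and does not affect the limit.
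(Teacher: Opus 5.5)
Your proof is correct and follows the same route as the paper: divide by $\exp(s_{\ell,\ell-1})$, show that every off-target weight $\alpha_{\ell,j}$ vanishes while $\alpha_{\ell,\ell-1}\to 1$, and substitute into the aggregation. Your version is slightly sharper than the paper's. The estimate $\|\mathbf{r}^{(\ell)}-\mathbf{v}^{(\ell-1)}\|_2\le (\ell-1)e^{-\Delta}\max_j\|\mathbf{v}^{(j)}-\mathbf{v}^{(\ell-1)}\|_2$ gives an explicit rate, and it states openly the assumption that the summaries are fixed or uniformly bounded, which the paper's final substitution step uses only implicitly.
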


\begin{proof}
From the softmax definition,
\begin{equation}
\alpha_{\ell,\ell-1}
=
\frac{\exp(s_{\ell,\ell-1})}
{\exp(s_{\ell,\ell-1})+\sum_{j\neq \ell-1}\exp(s_{\ell,j})}.
\label{eq:alpha_last_softmax_main}
\end{equation}
Dividing the numerator and denominator by $\exp(s_{\ell,\ell-1})$ gives
\begin{equation}
\alpha_{\ell,\ell-1}
=
\frac{1}
{1+\sum_{j\neq \ell-1}\exp(s_{\ell,j}-s_{\ell,\ell-1})}.
\label{eq:alpha_last_gap_main}
\end{equation}
Under Eq.~\eqref{eq:softmax_limiting_gap_main}, we have
$\exp(s_{\ell,j}-s_{\ell,\ell-1})\rightarrow 0$ for all
$j\neq \ell-1$. Therefore,
\begin{equation}
\alpha_{\ell,\ell-1}\rightarrow 1,
\qquad
\alpha_{\ell,j}\rightarrow 0
\quad
\forall j\neq \ell-1.
\label{eq:softmax_limit_weights_main}
\end{equation}
Substituting these limiting weights into the BAR aggregation yields
\begin{equation}
\mathbf{r}^{(\ell)}
=
\sum_{j=0}^{\ell-1}
\alpha_{\ell,j}\mathbf{v}^{(j)}
\rightarrow
\mathbf{v}^{(\ell-1)}.
\label{eq:limiting_residual_proof_main}
\end{equation}
This proves that standard one-step residual propagation is recovered as a limiting case of BAR.
\end{proof}

Thus, standard one-step residual propagation is recovered as a limiting case of BAR when the compatibility score of the immediately preceding block dominates all other historical block scores. BAR therefore generalizes fixed residual propagation into a soft, adaptive residual-routing mechanism.

\begin{theorem}[Adaptive Retrieval Dominance]
\label{thm:adaptive_retrieval_main}
Let $j^\star$ denote the most relevant preceding block, and suppose that there exists $\delta>0$ such that
\begin{equation}
s_{\ell,j^\star} \ge s_{\ell,j}+\delta,
\qquad \forall j\neq j^\star.
\label{eq:score_margin_main}
\end{equation}
Then the corresponding coefficient satisfies
\begin{equation}
\alpha_{\ell,j^\star}
\ge
\frac{1}{1+(\ell-1)e^{-\delta}}.
\label{eq:dominant_alpha_bound_main}
\end{equation}
\end{theorem}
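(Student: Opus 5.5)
The plan is to mirror the argument of Lemma~\ref{lem:standard_residual_main}: isolate the dominant coefficient and normalize by its exponential. Starting from Eq.~\eqref{eq:softmax_alpha_main} with $j=j^\star$, I would divide numerator and denominator by $\exp(s_{\ell,j^\star})$. This gives
\begin{equation*}
\alpha_{\ell,j^\star}
=
\frac{1}{1+\sum_{j\neq j^\star}\exp\!\left(s_{\ell,j}-s_{\ell,j^\star}\right)},
\end{equation*}
which is exactly the analogue of Eq.~\eqref{eq:alpha_last_gap_main} with $\ell-1$ replaced by $j^\star$.

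The key step is to bound each term of the remaining sum using the margin assumption Eq.~\eqref{eq:score_margin_main}. That assumption gives $s_{\ell,j}-s_{\ell,j^\star}\le-\delta$ for every $j\neq j^\star$. Since $\exp$ is increasing, each summand is then at most $e^{-\delta}$. The index set $\{0,\dots,\ell-1\}\setminus\{j^\star\}$ has exactly $\ell-1$ elements, so the whole sum is at most $(\ell-1)e^{-\delta}$.

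To conclude, I use that $x\mapsto 1/(1+x)$ is decreasing on $[0,\infty)$. The sum is nonnegative and bounded above by $(\ell-1)e^{-\delta}$, so this monotonicity yields $\alpha_{\ell,j^\star}\ge 1/(1+(\ell-1)e^{-\delta})$, which is Eq.~\eqref{eq:dominant_alpha_bound_main}.

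There is no genuine obstacle here. The only points needing care are counting the competitors correctly as $\ell-1$, and handling the degenerate case $\ell=1$. When $\ell=1$ the sum is empty and the bound reads $\alpha_{\ell,j^\star}\ge 1$, which agrees with Lemma~\ref{lem:softmax_partition_main}, since the single coefficient equals $1$.

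As an optional remark, I could note what the bound implies for the residual itself. Combining it with Lemma~\ref{lem:softmax_partition_main} shows that the total weight on the non-dominant blocks is at most $(\ell-1)e^{-\delta}/(1+(\ell-1)e^{-\delta})$. Under the bound of Theorem~\ref{thm:convex_boundedness_main}, this would then give $\|\mathbf{r}^{(\ell)}-\mathbf{v}^{(j^\star)}\|_2\le 2M(\ell-1)e^{-\delta}/(1+(\ell-1)e^{-\delta})$.
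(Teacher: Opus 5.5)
Your proof is correct and is essentially the paper's argument: the paper bounds each $\exp(s_{\ell,j})$ by $\exp(s_{\ell,j^\star})e^{-\delta}$ and substitutes this into the softmax denominator, which is the same step as your division by $\exp(s_{\ell,j^\star})$ followed by the monotonicity of $x\mapsto 1/(1+x)$. Your explicit count of the $\ell-1$ competitors and the $\ell=1$ check are valid, and so is the optional estimate $\|\mathbf{r}^{(\ell)}-\mathbf{v}^{(j^\star)}\|_2\le 2M(1-\alpha_{\ell,j^\star})$; these make explicit what the paper leaves implicit.
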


\begin{proof}
From the softmax definition,
\begin{equation}
\alpha_{\ell,j^\star}
=
\frac{\exp(s_{\ell,j^\star})}
{\exp(s_{\ell,j^\star})+\sum_{j\neq j^\star}\exp(s_{\ell,j})}.
\label{eq:alpha_star_start_main}
\end{equation}
Using Eq.~\eqref{eq:score_margin_main}, we have $\exp(s_{\ell,j})\le \exp(s_{\ell,j^\star})e^{-\delta}$ for all $j\neq j^\star$. Therefore,
\begin{equation}
\sum_{j\neq j^\star}\exp(s_{\ell,j})
\le
(\ell-1)\exp(s_{\ell,j^\star})e^{-\delta}.
\label{eq:sum_exp_margin_main}
\end{equation}
Substituting Eq.~\eqref{eq:sum_exp_margin_main} into Eq.~\eqref{eq:alpha_star_start_main} yields Eq.~\eqref{eq:dominant_alpha_bound_main}.
\end{proof}

The above results show that the BAR residual retrieval term is a convex and bounded aggregation of preceding block summaries under the stated assumptions. In addition, standard one-step residual propagation is recovered as a limiting case when the compatibility score of the immediately preceding block dominates all other historical block scores. Thus, BAR generalizes fixed residual propagation into a soft, adaptive residual-routing mechanism without implying an unconditional bound on the full hidden-state update.

\subsubsection{Architectural Significance and Novelty}
\label{sec:bar_significance_novelty}

The components inside Fig.~\ref{fig:barfi_q_barformer} serve distinct roles. RoPE injects relative temporal order into the query-key geometry, QK normalization stabilizes attention score magnitudes, the linear attention kernel enables scalable long-window forecasting, the sparse-MoE block increases conditional representation capacity, Expert SwiGLU improves nonlinear token transformation, and the gate controls expert contribution to the residual pathway. These components are not introduced as isolated novelties. Rather, their significance in BARFI-Q comes from their integration into a unified forecasting block centered on adaptive BAR aggregation.

The novelty of the proposed design lies in three aspects. First, BARFI-Q replaces fixed one-step residual propagation with attention-weighted cross-depth residual retrieval, enabling the model to reuse informative historical block representations. Second, the dual-branch BAR design combines complementary temporal pathways before hierarchical fusion, which is suitable for atom-interferometric streams where short-range fringe variations and longer-range phase drift coexist. Third, the BAR Transformer is embedded inside a quantum-enhanced forecasting pipeline, where the fused latent representation is further enriched by QFM before final prediction. This makes BARFI-Q a task-specific architecture for multivariate atom-interferometric forecasting rather than a generic Transformer variant.

\subsection{Hierarchical Fusion Block}
\label{sec:fusion_block}

\begin{figure*}[t]
    \centering
    \includegraphics[width=\textwidth]{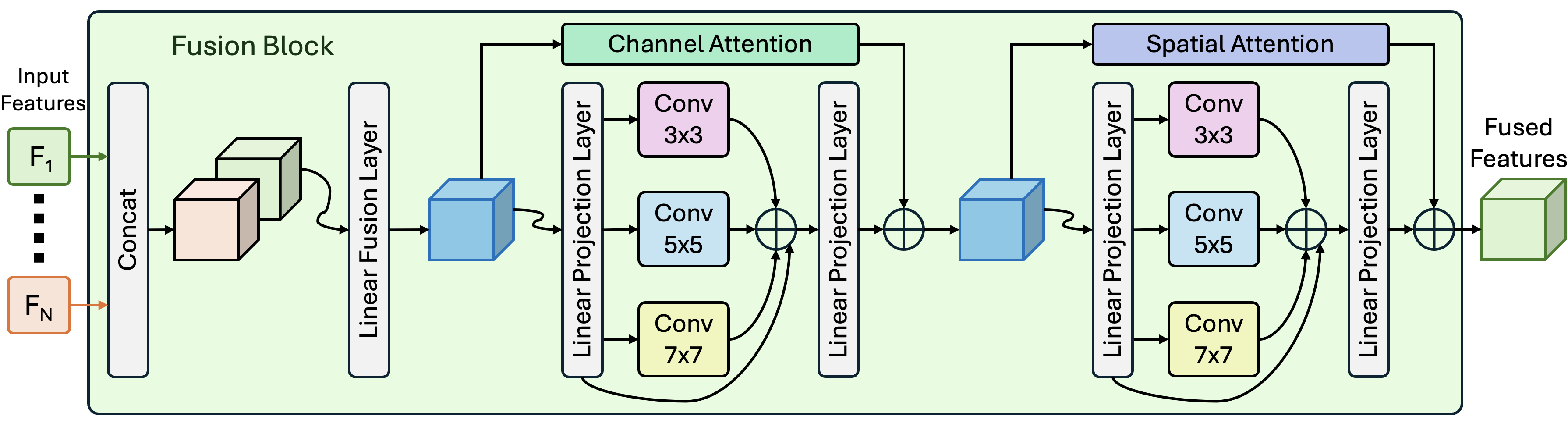}
    \caption{Hierarchical fusion block used in BARFI-Q. Multiple branch-level feature streams are concatenated and projected into a common latent space, followed by multiscale channel attention and spatial attention refinement. The block progressively integrates complementary temporal, channel-wise, and local structural cues before producing the final fused representation.}
    \label{fig:barfi_q_fusion}
\end{figure*}

The dual BAR Transformer branches produce complementary temporal representations that capture different dependency patterns in atom-interferometric time-series signals. However, directly adding or concatenating these features may lead to redundant, weakly aligned, or scale-inconsistent representations. Therefore, BARFI-Q introduces a hierarchical fusion block that progressively combines branch-specific features through concatenation, linear projection, multiscale channel refinement, spatial refinement, and residual-guided aggregation.

Let $\{\mathbf{F}_{i}\}_{i=1}^{N}$ denote the input feature streams to the fusion block, where $\mathbf{F}_{i}\in\mathbb{R}^{T\times C_i}$, $T$ is the temporal length, and $C_i$ is the channel dimension of the $i$-th feature stream. In the dual-branch case, $N=2$, with $\mathbf{F}_{1}=\mathbf{H}^{(1)}$ and $\mathbf{F}_{2}=\mathbf{H}^{(2)}$. The block first concatenates the input features along the channel dimension:
\begin{equation}
\mathbf{F}_{\mathrm{cat}}
=
\left[
\mathbf{F}_{1};
\mathbf{F}_{2};
\cdots;
\mathbf{F}_{N}
\right]
\in
\mathbb{R}^{T\times C_{\mathrm{cat}}},
\qquad
C_{\mathrm{cat}}=\sum_{i=1}^{N} C_i .
\label{eq:fusion_concat_full}
\end{equation}

The concatenated representation is then mapped into a unified latent space through a linear fusion layer:
\begin{equation}
\mathbf{X}_{0}
=
\mathrm{Linear}_{f}
\left(
\mathbf{F}_{\mathrm{cat}}
\right)
=
\mathbf{F}_{\mathrm{cat}}\mathbf{W}_{f}
+
\mathbf{b}_{f},
\label{eq:fusion_linear_projection}
\end{equation}
where $\mathbf{W}_{f}\in\mathbb{R}^{C_{\mathrm{cat}}\times d}$ and $\mathbf{b}_{f}\in\mathbb{R}^{d}$ are learnable parameters. This projection ensures that all input streams are represented in the same latent dimension before attention-based refinement.

\subsubsection{Multiscale Channel Attention}
\label{sec:fusion_channel_attention}

The first refinement stage applies channel attention to identify informative latent channels. Since different channels may encode phase variation, fringe dynamics, amplitude changes, or auxiliary sensing correlations, the fusion block uses multiscale local filtering to enhance channel-level selectivity. The projected feature is first transformed as
\begin{equation}
\mathbf{X}_{c}
=
\mathrm{Linear}_{c}^{\mathrm{in}}
\left(
\mathbf{X}_{0}
\right).
\label{eq:channel_input_projection}
\end{equation}

Three parallel convolutional filters with different receptive fields are then applied:
\begin{equation}
\mathbf{C}_{3}
=
\mathrm{Conv}_{3\times3}
\left(
\mathbf{X}_{c}
\right),
\qquad
\mathbf{C}_{5}
=
\mathrm{Conv}_{5\times5}
\left(
\mathbf{X}_{c}
\right),
\qquad
\mathbf{C}_{7}
=
\mathrm{Conv}_{7\times7}
\left(
\mathbf{X}_{c}
\right).
\label{eq:channel_multiscale_conv}
\end{equation}
The three kernels capture fine, medium, and broader temporal-context patterns, respectively. Their outputs are combined through residual multiscale aggregation:
\begin{equation}
\mathbf{C}_{\mathrm{ms}}
=
\mathbf{C}_{3}
+
\mathbf{C}_{5}
+
\mathbf{C}_{7}.
\label{eq:channel_multiscale_sum}
\end{equation}

A linear projection layer then maps the multiscale representation back to the latent dimension:
\begin{equation}
\widehat{\mathbf{C}}
=
\mathrm{Linear}_{c}^{\mathrm{out}}
\left(
\mathbf{C}_{\mathrm{ms}}
\right).
\label{eq:channel_output_projection}
\end{equation}

To generate channel-attention weights, global temporal pooling is applied:
\begin{equation}
\mathbf{z}_{c}
=
\frac{1}{T}
\sum_{t=1}^{T}
\widehat{\mathbf{C}}_{t,:}.
\label{eq:channel_global_pool}
\end{equation}
The channel gate is then computed as
\begin{equation}
\mathbf{a}_{c}
=
\sigma
\left(
\mathbf{W}_{c,2}
\,
\delta
\left(
\mathbf{W}_{c,1}\mathbf{z}_{c}
\right)
\right),
\label{eq:channel_attention_gate}
\end{equation}
where $\delta(\cdot)$ is a nonlinear activation function and $\sigma(\cdot)$ denotes the sigmoid function. The channel-refined representation becomes
\begin{equation}
\mathbf{X}_{1}
=
\mathbf{X}_{0}
+
\widehat{\mathbf{C}}
\odot
\mathbf{a}_{c}.
\label{eq:channel_refined_output}
\end{equation}
Here, $\odot$ denotes element-wise multiplication with broadcasting over the temporal dimension. This stage is inspired by channel-attention mechanisms such as SE and CBAM, but in BARFI-Q it is adapted for multiscale temporal fusion rather than image feature recalibration \cite{hu2018senet,woo2018cbam}.

\subsubsection{Multiscale Spatial Attention}
\label{sec:fusion_spatial_attention}

After channel refinement, the fusion block applies spatial attention to emphasize important temporal positions and local structures. In this context, spatial attention refers to attention over the temporal-token dimension rather than image-space pixels. The channel-refined feature is first projected:
\begin{equation}
\mathbf{X}_{s}
=
\mathrm{Linear}_{s}^{\mathrm{in}}
\left(
\mathbf{X}_{1}
\right).
\label{eq:spatial_input_projection}
\end{equation}

Similar to the channel-attention stage, multiscale convolutional filters are used to extract local temporal structures:
\begin{equation}
\mathbf{S}_{3}
=
\mathrm{Conv}_{3\times3}
\left(
\mathbf{X}_{s}
\right),
\qquad
\mathbf{S}_{5}
=
\mathrm{Conv}_{5\times5}
\left(
\mathbf{X}_{s}
\right),
\qquad
\mathbf{S}_{7}
=
\mathrm{Conv}_{7\times7}
\left(
\mathbf{X}_{s}
\right).
\label{eq:spatial_multiscale_conv}
\end{equation}

The multiscale spatial representation is computed as
\begin{equation}
\mathbf{S}_{\mathrm{ms}}
=
\mathbf{S}_{3}
+
\mathbf{S}_{5}
+
\mathbf{S}_{7}.
\label{eq:spatial_multiscale_sum}
\end{equation}
It is then projected to the latent dimension:
\begin{equation}
\widehat{\mathbf{S}}
=
\mathrm{Linear}_{s}^{\mathrm{out}}
\left(
\mathbf{S}_{\mathrm{ms}}
\right).
\label{eq:spatial_output_projection}
\end{equation}

The spatial-attention map is obtained by compressing the channel dimension:
\begin{equation}
\mathbf{a}_{s}
=
\sigma
\left(
\mathrm{Conv}_{1\times1}
\left(
\left[
\mathrm{AvgPool}_{c}(\widehat{\mathbf{S}});
\mathrm{MaxPool}_{c}(\widehat{\mathbf{S}})
\right]
\right)
\right),
\label{eq:spatial_attention_map}
\end{equation}
where $\mathrm{AvgPool}_{c}(\cdot)$ and $\mathrm{MaxPool}_{c}(\cdot)$ denote average and maximum pooling along the channel dimension. The spatially refined feature is then computed as
\begin{equation}
\mathbf{X}_{2}
=
\mathbf{X}_{1}
+
\widehat{\mathbf{S}}
\odot
\mathbf{a}_{s}.
\label{eq:spatial_refined_output}
\end{equation}

Finally, the fused representation is obtained through an output projection:
\begin{equation}
\mathbf{F}
=
\mathrm{Linear}_{o}
\left(
\mathbf{X}_{2}
\right)
=
\mathbf{X}_{2}\mathbf{W}_{o}
+
\mathbf{b}_{o}.
\label{eq:fusion_output_projection}
\end{equation}

The output $\mathbf{F}$ is passed to the subsequent QFM enhancement and forecasting head. This hierarchical design allows BARFI-Q to first align heterogeneous branch outputs, then refine informative channels, and finally emphasize important temporal regions. As a result, the fusion block provides a structured mechanism for combining global branch-level abstractions with local multiscale temporal patterns.

\subsection{Significance and Novelty of the Fusion Block}
\label{sec:fusion_block_novelty}

The significance of the proposed fusion block lies in its progressive refinement strategy. Simple concatenation combines branch outputs but does not determine which feature channels or temporal regions are more informative. Direct addition is even more restrictive because it assumes that the two branches produce aligned and equally useful representations. In contrast, the proposed fusion block performs three levels of integration: feature-level concatenation, channel-wise multiscale attention, and temporal-spatial attention.

The novelty is not the isolated use of convolution or attention, but their task-specific organization for BARFI-Q. First, the block fuses branch-specific BAR representations in a common latent space. Second, multiscale convolutional filters extract local temporal structures across different receptive fields, which is important for atom-interferometric signals, where short-fringe variations and broader phase evolution may coexist. Third, channel and spatial attention jointly suppress redundant information and enhance informative sensing patterns. Therefore, the fusion block acts as an adaptive bridge between the dual BAR Transformer branches and the quantum-enhanced forecasting stage.

\subsection{Quantum Feature Mapping Block}
\label{sec:qfm_block}

\begin{figure*}[t]
    \centering
    \includegraphics[width=\textwidth]{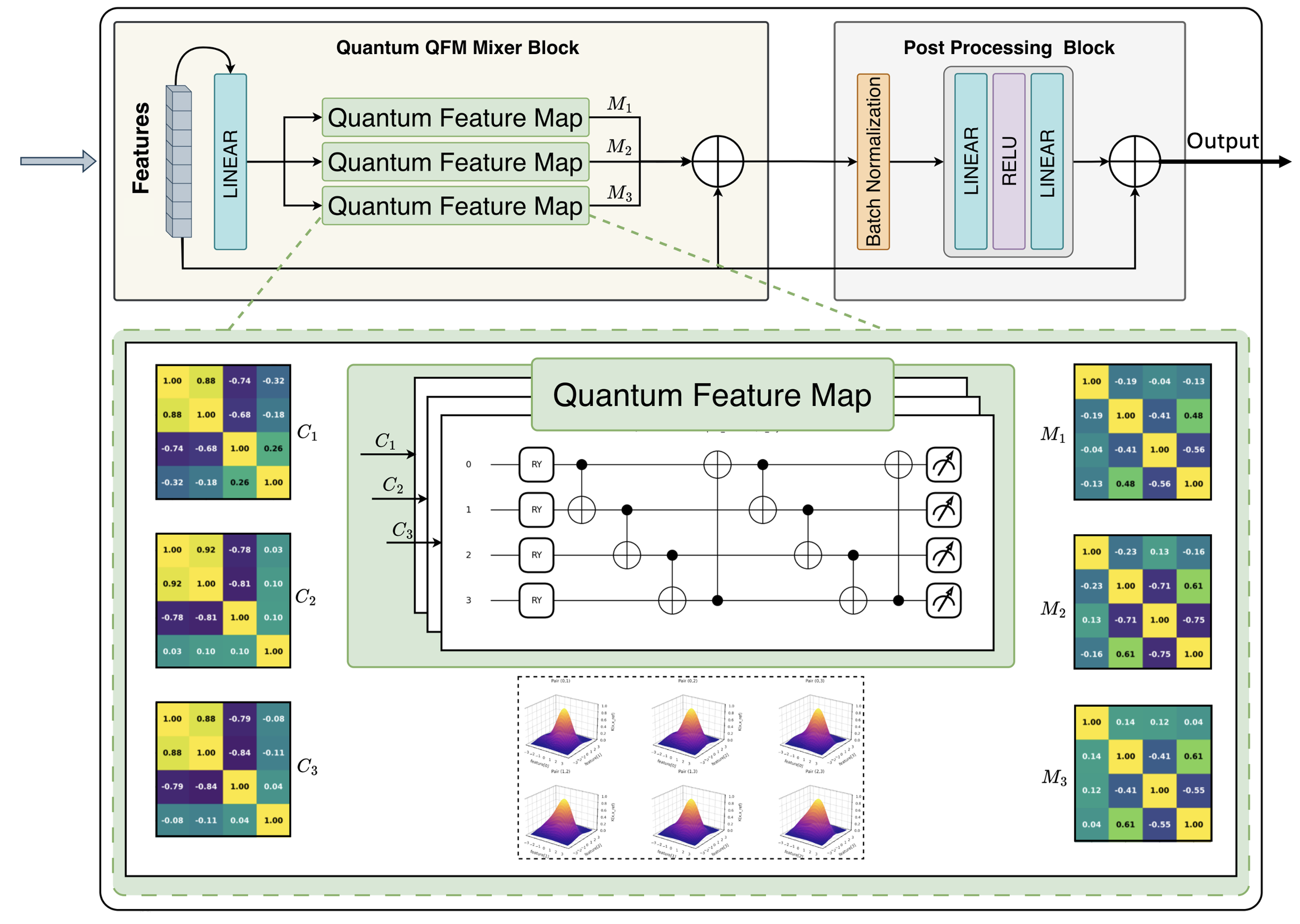}
    \caption{Quantum Feature Mapping (QFM) block used in BARFI-Q. The fused latent feature is first projected into a compact latent vector and then processed by multiple quantum feature-mapping heads. Each head encodes the projected representation into parameterized quantum rotations, applies entangling operations, and returns measurement-based features. The resulting measurement maps are aggregated through a residual mixer and refined by a lightweight post-processing block before forecasting.}
    \label{fig:barfi_qfm}
\end{figure*}

After hierarchical fusion, BARFI-Q applies a Quantum Feature Mapping (QFM) block to enrich the fused latent representation before the final forecasting stage. The motivation is to transform the fused classical representation into a higher-order feature space in which nonlinear correlations among temporal variables can be expressed via quantum-inspired rotations, entanglement, and measurement statistics. Quantum feature maps provide a principled way to embed classical data into Hilbert-space representations and have been widely studied in quantum machine learning \cite{havlicek2019quantum,schuld2019quantum,dunjko2018machine}. In BARFI-Q, however, QFM is not used as a standalone quantum predictor. Instead, it acts as an internal feature mixer that enhances the latent representation produced by the BAR Transformer and hierarchical fusion modules.

Let $\mathbf{F}\in\mathbb{R}^{T\times d}$ denote the fused latent representation, where $T$ is the temporal length and $d$ is the hidden dimension. The QFM block first projects the fused feature into a compact quantum-compatible latent representation:
\begin{equation}
\mathbf{Z}
=
\mathrm{Linear}_{q}(\mathbf{F})
=
\mathbf{F}\mathbf{W}_{q}
+
\mathbf{b}_{q},
\label{eq:qfm_linear_projection}
\end{equation}
where $\mathbf{W}_{q}\in\mathbb{R}^{d\times d_q}$ and $\mathbf{b}_{q}\in\mathbb{R}^{d_q}$ are learnable parameters. This projection reduces the latent representation into a dimension suitable for the QFM heads.

\subsubsection{Multi-Head Quantum Feature Mapping}
\label{sec:multihead_qfm}

The projected representation $\mathbf{Z}$ is processed by $K$ parallel QFM heads. In Fig.~\ref{fig:barfi_qfm}, three heads are shown for illustration, producing measurement maps $\mathbf{M}_{1}$, $\mathbf{M}_{2}$, and $\mathbf{M}_{3}$. For the $k$-th QFM head, the projected latent feature is first mapped into a head-specific encoding vector:
\begin{equation}
\boldsymbol{\theta}_{k,t}
=
\mathbf{Z}_{t,:}\mathbf{W}_{\theta,k}
+
\mathbf{b}_{\theta,k},
\qquad
k=1,\ldots,K,
\label{eq:qfm_head_angles}
\end{equation}
where $\boldsymbol{\theta}_{k,t}\in\mathbb{R}^{n_q}$ contains the encoding angles for $n_q$ qubits at temporal index $t$.

Each QFM head prepares an initial quantum state and encodes the projected feature through rotation gates:
\begin{equation}
\ket{\psi_{k,t}^{(0)}}
=
\ket{0}^{\otimes n_q},
\label{eq:qfm_initial_state}
\end{equation}
\begin{equation}
\ket{\psi_{k,t}^{(1)}}
=
\left(
\bigotimes_{r=1}^{n_q}
R_Y(\theta_{k,t,r})
\right)
\ket{\psi_{k,t}^{(0)}}.
\label{eq:qfm_ry_encoding}
\end{equation}
Here, $R_Y(\theta)$ is the rotation gate around the $Y$-axis:
\begin{equation}
R_Y(\theta)
=
\begin{bmatrix}
\cos(\theta/2) & -\sin(\theta/2)\\
\sin(\theta/2) & \cos(\theta/2)
\end{bmatrix}.
\label{eq:ry_gate}
\end{equation}

To model feature interactions, the encoded state is passed through an entangling layer. For a ring-entanglement structure, the transformation can be written as
\begin{equation}
\ket{\psi_{k,t}^{(2)}}
=
\left(
\prod_{r=1}^{n_q}
\mathrm{CNOT}_{r,(r \bmod n_q)+1}
\right)
\ket{\psi_{k,t}^{(1)}}.
\label{eq:qfm_entanglement}
\end{equation}
This entanglement stage allows the QFM head to model dependencies among latent dimensions that cannot be represented by independent scalar transformations alone.

The output of the $k$-th QFM head is obtained by measuring Pauli-$Z$ expectations:
\begin{equation}
\mathbf{m}_{k,t}
=
\left[
\langle \psi_{k,t}^{(2)}|\sigma_z^{(1)}|\psi_{k,t}^{(2)}\rangle,
\ldots,
\langle \psi_{k,t}^{(2)}|\sigma_z^{(n_q)}|\psi_{k,t}^{(2)}\rangle
\right],
\label{eq:qfm_measurement_vector}
\end{equation}
where $\sigma_z^{(r)}$ denotes the Pauli-$Z$ operator acting on the $r$-th qubit. Collecting all temporal positions gives
\begin{equation}
\mathbf{M}_{k}
=
\mathrm{QFM}_{k}(\mathbf{Z})
\in
\mathbb{R}^{T\times n_q},
\qquad
k=1,\ldots,K.
\label{eq:qfm_measurement_map}
\end{equation}

The correlation maps shown in Fig.~\ref{fig:barfi_qfm} visualize how the QFM heads reshape latent dependencies. For the input of the $k$-th head, the pre-QFM correlation matrix is
\begin{equation}
\mathbf{C}_{k}
=
\mathrm{Corr}
\left(
\mathbf{Z}_{k}
\right),
\label{eq:qfm_input_correlation}
\end{equation}
while the measurement-induced correlation matrix is
\begin{equation}
\mathbf{P}_{k}
=
\mathrm{Corr}
\left(
\mathbf{M}_{k}
\right).
\label{eq:qfm_output_correlation}
\end{equation}
These maps are not additional trainable modules; rather, they illustrate the transformation of latent relationships before and after quantum feature mapping.

\subsubsection{QFM Mixer and Residual Aggregation}
\label{sec:qfm_mixer}

The outputs from all QFM heads are concatenated:
\begin{equation}
\mathbf{M}_{\mathrm{cat}}
=
\left[
\mathbf{M}_{1};
\mathbf{M}_{2};
\cdots;
\mathbf{M}_{K}
\right].
\label{eq:qfm_concat_heads}
\end{equation}
They are then projected back to the model dimension:
\begin{equation}
\mathbf{M}
=
\mathrm{Linear}_{m}
\left(
\mathbf{M}_{\mathrm{cat}}
\right)
=
\mathbf{M}_{\mathrm{cat}}\mathbf{W}_{m}
+
\mathbf{b}_{m}.
\label{eq:qfm_mixer_projection}
\end{equation}

To preserve the original fused representation and avoid over-transforming the latent space, QFM uses a residual mixer:
\begin{equation}
\mathbf{Y}_{q}
=
\mathbf{Z}
+
\mathbf{M}.
\label{eq:qfm_residual_mixer}
\end{equation}
This residual design allows the QFM module to act as a feature enhancer rather than completely replacing the fused temporal representation.

\subsubsection{Post-Processing Block}
\label{sec:qfm_postprocessing}

The residual QFM output is refined through a lightweight post-processing block consisting of batch normalization, linear transformation, nonlinear activation, and output projection:
\begin{equation}
\widehat{\mathbf{Y}}_{q}
=
\mathrm{BN}
\left(
\mathbf{Y}_{q}
\right),
\label{eq:qfm_batch_norm}
\end{equation}
\begin{equation}
\mathbf{P}_{q}
=
\mathrm{Linear}_{2}
\left(
\mathrm{ReLU}
\left(
\mathrm{Linear}_{1}
\left(
\widehat{\mathbf{Y}}_{q}
\right)
\right)
\right).
\label{eq:qfm_postprocess}
\end{equation}
Finally, a second residual connection produces the QFM-enhanced output:
\begin{equation}
\mathbf{Q}
=
\mathbf{Y}_{q}
+
\mathbf{P}_{q}.
\label{eq:qfm_final_output}
\end{equation}

The output $\mathbf{Q}$ is passed to the forecasting head. This design is significant because it combines three forms of feature enhancement: linear latent alignment, measurement-based quantum feature mixing, and residual post-processing. The multi-head structure allows different QFM heads to learn complementary latent transformations, while the residual paths preserve stable information flow from the fused BAR representation to the final prediction layer.

\subsection{Mathematical Properties of QFM and Latent Stability}
\label{sec:qfm_math_properties}

\begin{definition}[Quantum Feature Mapping Enhancement]
\label{def:qfm_main}
Let $\mathbf{f}\in\mathbb{R}^{d}$ denote the fused latent representation produced by the BARFI-Q backbone. The Quantum Feature Mapping module applies a transformation
\begin{equation}
\mathbf{q}
=
\Phi_{Q}(\mathbf{f}),
\label{eq:qfm_mapping_main}
\end{equation}
where $\Phi_{Q}(\cdot)$ denotes the quantum-enhanced feature map and $\mathbf{q}\in\mathbb{R}^{d_q}$ is the enriched latent representation.
\end{definition}

\begin{lemma}[Bounded Measurement Output]
\label{lem:qfm_measurement_bound}
For each QFM head, the Pauli-$Z$ measurement output satisfies
\begin{equation}
-1
\le
\langle \psi|\sigma_z|\psi\rangle
\le
1.
\label{eq:qfm_pauli_bound}
\end{equation}
Therefore, for an $n_q$-qubit measurement vector $\mathbf{m}\in\mathbb{R}^{n_q}$,
\begin{equation}
\|\mathbf{m}\|_2
\le
\sqrt{n_q}.
\label{eq:qfm_vector_bound}
\end{equation}
\end{lemma}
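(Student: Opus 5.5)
The plan is to prove the scalar bound in Eq.~\eqref{eq:qfm_pauli_bound} by a spectral argument and then obtain the vector bound in Eq.~\eqref{eq:qfm_vector_bound} by summing squares coordinate-wise.

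\textbf{Scalar bound.} The single-qubit Pauli operator $\sigma_z=\mathrm{diag}(1,-1)$ is Hermitian and unitary. Its eigenvalues are $\pm1$, with eigenvectors $\ket{0}$ and $\ket{1}$. Acting on the $r$-th qubit of an $n_q$-qubit register, the operator $\sigma_z^{(r)}=I\otimes\cdots\otimes\sigma_z\otimes\cdots\otimes I$ is again Hermitian with spectrum $\{-1,+1\}$. This holds because the eigenvalues of a tensor product are products of the factor eigenvalues, and the identity factors contribute only $1$.

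Now let $\ket{\psi}=\ket{\psi_{k,t}^{(2)}}$ be the state of a QFM head. It is normalized because it is obtained from $\ket{0}^{\otimes n_q}$ by the unitary gates in Eqs.~\eqref{eq:qfm_ry_encoding} and \eqref{eq:qfm_entanglement}: each $R_Y$ in Eq.~\eqref{eq:ry_gate} is a real rotation matrix, and CNOT is a permutation matrix. Expanding $\ket{\psi}=\sum_x c_x\ket{x}$ in the computational basis, which diagonalizes $\sigma_z^{(r)}$, gives
\[
\langle\psi|\sigma_z^{(r)}|\psi\rangle=\sum_x (-1)^{x_r}|c_x|^2 .
\]
The weights $|c_x|^2$ are nonnegative and sum to $1$, so this is a convex combination of $\pm1$ and therefore lies in $[-1,1]$. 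Equivalently, one can argue via the Rayleigh quotient bound $\lambda_{\min}\le\langle\psi|A|\psi\rangle\le\lambda_{\max}$ for Hermitian $A$.

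\textbf{Vector bound.} By Eq.~\eqref{eq:qfm_measurement_vector}, each coordinate of $\mathbf{m}$ satisfies $|m_r|\le1$. Hence
\[
\|\mathbf{m}\|_2^2=\sum_{r=1}^{n_q}m_r^2\le n_q ,
\]
and taking square roots yields $\|\mathbf{m}\|_2\le\sqrt{n_q}$.

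\textbf{Main obstacle.} The only step needing care is confirming that the state is normalized and the measured operator has spectrum in $[-1,1]$. Normalization follows from the unitarity of every gate in the circuit, which I will state explicitly. The bound also extends to mixed states or noisy channels, since $\mathrm{tr}(\rho\,\sigma_z)$ with $\rho\succeq0$ and $\mathrm{tr}\rho=1$ is likewise a convex combination of eigenvalues. Everything else is routine.
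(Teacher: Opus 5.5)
Your proof is correct and takes essentially the same route as the paper: the eigenvalues $\pm1$ of $\sigma_z$ place the expectation on a normalized state in $[-1,1]$, and the componentwise bound $|m_r|\le 1$ then gives $\|\mathbf{m}\|_2\le\sqrt{n_q}$. Your additions make explicit what the paper leaves implicit, namely that the state is normalized because the $R_Y$ and CNOT gates are unitary, and that $\sigma_z^{(r)}$ on the full register still has spectrum $\{-1,+1\}$; they do not change the argument.
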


\begin{proof}
The Pauli-$Z$ operator has eigenvalues $+1$ and $-1$. Hence, the expectation value of $\sigma_z$ for any normalized quantum state lies in the interval $[-1,1]$. Since every component of $\mathbf{m}$ is bounded by one in magnitude, $\|\mathbf{m}\|_2\le\sqrt{n_q}$.
\end{proof}

\begin{theorem}[Boundedness of Quantum-Enhanced Features]
\label{thm:qfm_boundedness_main}
Assume that the fused latent input satisfies
\begin{equation}
\|\mathbf{f}\|_2
\le
B_f,
\label{eq:fused_bound_main}
\end{equation}
and that the QFM operator is Lipschitz continuous with constant $L_Q$, i.e.,
\begin{equation}
\|\Phi_Q(\mathbf{u})-\Phi_Q(\mathbf{v})\|_2
\le
L_Q\|\mathbf{u}-\mathbf{v}\|_2,
\qquad
\forall \mathbf{u},\mathbf{v}.
\label{eq:qfm_lipschitz_main}
\end{equation}
Then the quantum-enhanced representation $\mathbf{q}$ satisfies
\begin{equation}
\|\mathbf{q}\|_2
\le
L_Q B_f
+
\|\Phi_Q(\mathbf{0})\|_2.
\label{eq:qfm_bound_main}
\end{equation}
\end{theorem}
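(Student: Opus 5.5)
The plan is to combine the Lipschitz hypothesis in Eq.~\eqref{eq:qfm_lipschitz_main} with the triangle inequality, using the origin $\mathbf{0}$ as a reference point. By Definition~\ref{def:qfm_main} we have $\mathbf{q}=\Phi_Q(\mathbf{f})$. First we split off the value of the map at the origin:
\begin{equation}
\mathbf{q}=\bigl(\Phi_Q(\mathbf{f})-\Phi_Q(\mathbf{0})\bigr)+\Phi_Q(\mathbf{0}).
\end{equation}
Taking norms and applying the triangle inequality gives $\|\mathbf{q}\|_2\le\|\Phi_Q(\mathbf{f})-\Phi_Q(\mathbf{0})\|_2+\|\Phi_Q(\mathbf{0})\|_2$.

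Next we apply Eq.~\eqref{eq:qfm_lipschitz_main} with $\mathbf{u}=\mathbf{f}$ and $\mathbf{v}=\mathbf{0}$. This bounds the first term by $L_Q\|\mathbf{f}\|_2$. The input assumption Eq.~\eqref{eq:fused_bound_main} then bounds it further by $L_Q B_f$. Substituting back yields Eq.~\eqref{eq:qfm_bound_main}. The argument is short, and Lemma~\ref{lem:qfm_measurement_bound} is not needed for it.

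The only real point of care is that the Lipschitz hypothesis must hold on a domain containing both $\mathbf{0}$ and $\mathbf{f}$. The statement assumes it for all $\mathbf{u},\mathbf{v}$, so this is automatic. One should also check that $\Phi_Q(\mathbf{0})$ is finite, which holds because $\Phi_Q$ is a well-defined map into $\mathbb{R}^{d_q}$.

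After the proof, I would add a remark that for the concrete architecture the constant $\|\Phi_Q(\mathbf{0})\|_2$ can be controlled explicitly. Lemma~\ref{lem:qfm_measurement_bound} bounds each head's measurement vector by $\sqrt{n_q}$, and the surrounding linear and residual maps are affine. I would not attempt to prove that the full QFM pipeline is Lipschitz, since that is a hypothesis of the theorem rather than part of its claim.
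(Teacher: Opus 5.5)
Your proof is correct and matches the paper's own argument: add and subtract $\Phi_Q(\mathbf{0})$, apply the triangle inequality, then use the Lipschitz condition with $\mathbf{u}=\mathbf{f}$, $\mathbf{v}=\mathbf{0}$ and the bound $\|\mathbf{f}\|_2\le B_f$. You are also right that Lemma~\ref{lem:qfm_measurement_bound} is not needed; the paper's proof does not use it either.
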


\begin{proof}
Using $\mathbf{q}=\Phi_Q(\mathbf{f})$ and adding and subtracting $\Phi_Q(\mathbf{0})$, we obtain
\begin{equation}
\|\mathbf{q}\|_2
=
\|\Phi_Q(\mathbf{f})\|_2
\le
\|\Phi_Q(\mathbf{f})-\Phi_Q(\mathbf{0})\|_2
+
\|\Phi_Q(\mathbf{0})\|_2.
\label{eq:qfm_triangle_main}
\end{equation}
Applying the Lipschitz condition in Eq.~\eqref{eq:qfm_lipschitz_main} and the input bound in Eq.~\eqref{eq:fused_bound_main} gives Eq.~\eqref{eq:qfm_bound_main}.
\end{proof}

\begin{corollary}[Stability of the BARFI-Q Latent Pipeline]
\label{cor:pipeline_stability_main}
Suppose that the BAR aggregation is bounded as in Theorem~\ref{thm:convex_boundedness_main}, the fusion block maps bounded inputs to bounded outputs, and the QFM module satisfies Theorem~\ref{thm:qfm_boundedness_main}. Then the final latent representation used for forecasting in BARFI-Q remains bounded.
\end{corollary}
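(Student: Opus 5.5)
The plan is to chain bounds stage by stage along the latent pipeline, carrying an explicit constant from the BAR branches through the fusion block to the QFM output. Since the Corollary's hypotheses are stated qualitatively, I would first make them quantitative.

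\textbf{Step 1: the residual pathway.} Theorem~\ref{thm:convex_boundedness_main} gives $\|\mathbf{r}^{(\ell)}\|_2\le M$ whenever the preceding block summaries $\mathbf{v}^{(j)}=\mathbf{W}_r^{v}\mathbf{B}^{(j)}$ satisfy $\|\mathbf{v}^{(j)}\|_2\le M$. I would read the hypothesis that ``the BAR aggregation is bounded'' as the statement that this holds at every depth of each branch. I would also read it as covering the remaining components of the block: the attention output, the gated MoE term in Eq.~\eqref{eq:bar_final_update}, and the broadcast update in Eq.~\eqref{eq:bar_broadcast_update}.

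\textbf{Step 2: branch outputs.} Under that reading, each branch output $\mathbf{H}_b$ has norm at most some $B_H<\infty$. I would obtain this by induction on $\ell$ over the finitely many blocks $L_b$, adding the bound on $\mathbf{r}^{(\ell)}$ to bounds on the other terms at each step.

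I expect this to be the main obstacle. The remark after Theorem~\ref{thm:convex_boundedness_main} explicitly warns that the theorem does not bound the full hidden-state update. I would therefore state plainly that the Corollary's first hypothesis must be interpreted as boundedness of the whole block map, not just of $\mathbf{r}^{(\ell)}$. I would then justify that this interpretation is natural:
\begin{itemize}
\item the linear attention output in Eq.~\eqref{eq:linear_attention_bar} is a convex combination of the rows of $\mathbf{V}^{(\ell)}$, since $\phi>0$ makes the weights positive and normalized;
\item the gate $\mathbf{G}^{(\ell)}$ takes values in $(0,1)$;
\item the experts are continuous maps applied to LayerNorm outputs, which lie on a bounded set, so their outputs are bounded.
\end{itemize}
Hence each block sends bounded inputs to bounded outputs with an explicit constant. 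Finiteness of depth then closes the induction.

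\textbf{Step 3: fusion and QFM.} Invoke the hypothesis that the fusion block maps bounded inputs to bounded outputs. This gives $\|\mathbf{F}\|\le B_F$, where $B_F$ depends on $B_H$; row-wise this yields $\|\mathbf{f}\|_2\le B_f$. Theorem~\ref{thm:qfm_boundedness_main} then gives $\|\mathbf{q}\|_2\le L_Q B_f+\|\Phi_Q(\mathbf{0})\|_2$.

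I would add a cross-check against the concrete QFM structure. Lemma~\ref{lem:qfm_measurement_bound} bounds each head's measurement vector by $\sqrt{n_q}$, so $\|\mathbf{M}\|\le \|\mathbf{W}_m\|\sqrt{Kn_q}+\|\mathbf{b}_m\|$ independently of the input. The residual terms $\mathbf{Z}$ and $\mathbf{P}_q$ are affine or ReLU-MLP images of bounded quantities, with BN treated as a fixed affine map at inference. This shows $\mathbf{Q}$ is bounded consistently with the Lipschitz hypothesis.

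\textbf{Conclusion.} Composing the three bounds gives a finite constant depending only on parameter norms, $M$, $K$, $n_q$, and depth. This is exactly the boundedness of the final latent representation passed to the forecasting head.
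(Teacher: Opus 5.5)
Your argument follows the same chain as the paper's proof: BAR branches, then the fusion hypothesis, then Theorem~\ref{thm:qfm_boundedness_main}. The difference is that you actually justify the first link. The paper's proof only asserts that ``bounded BAR aggregation produces bounded temporal representations.'' That does not follow from Theorem~\ref{thm:convex_boundedness_main} alone, as the paper's own remark after that theorem concedes, since the theorem bounds only $\mathbf{r}^{(\ell)}$. Your induction over the finitely many blocks is exactly the missing justification:
\begin{itemize}
\item the linear attention is sub-convex, since $\phi>0$ and the $+\epsilon$ in the denominator only shrinks the weights;
\item the gate takes values in $(0,1)$;
\item the experts act on the bounded range of LayerNorm, with routing weights summing to at most one.
\end{itemize}
These points prove block-level boundedness directly from the architecture, so you do not need to reinterpret the hypothesis at all. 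Even the constant $M$ of Theorem~\ref{thm:convex_boundedness_main} can be produced inside the induction, because $\mathbf{v}^{(j)}=\mathbf{W}_r^{v}\mathrm{Pool}(\mathbf{H}^{(j)})$ with $j<\ell$ is already bounded by the inductive hypothesis. Likewise, your cross-check via Lemma~\ref{lem:qfm_measurement_bound} shows that $\mathbf{Q}$ is bounded once $\mathbf{F}$ is, without the Lipschitz assumption. So your version is strictly more rigorous than the paper's one-line composition argument.

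One correction concerns the base case of your induction, $\mathbf{H}^{(0)}=\mathbf{E}$. Every block carries $\mathbf{H}^{(\ell)}$ forward additively through Eq.~\eqref{eq:local_attention_residual_bar}, and the fusion and QFM stages likewise propagate their inputs affinely. Hence no input-independent bound is available in general. You must explicitly assume that the input windows, and hence the patch embeddings, range over a bounded set; this is reasonable given the channel normalization. Your final constant then depends on that input bound, not only on parameter norms, $M$, $K$, $n_q$, and depth as your conclusion states.
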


\begin{proof}
The result follows from the composition of bounded operators. Bounded BAR aggregation produces bounded temporal representations. The fusion block maps these bounded inputs into a bounded fused latent representation. The QFM module then preserves boundedness under the Lipschitz condition. Therefore, the final latent representation used by the forecasting head remains bounded.
\end{proof}

These results provide a component-wise stability justification for the main latent transformations in BARFI-Q under the stated boundedness and Lipschitz assumptions. The BAR mechanism constrains the retrieved residual component through convex aggregation, while the QFM block contributes bounded measurement-based features before residual post-processing. Therefore, the analysis supports bounded latent enhancement under explicit assumptions rather than an unconditional global stability guarantee for every hidden state in the full network.

\subsection{Forecasting Head}
\label{sec:forecasting_head}

The final QFM-enhanced latent representation is passed to a lightweight forecasting head to generate multi-step future predictions. Let
$\mathbf{Q}\in\mathbb{R}^{T\times d_q}$ denote the output of the Quantum Feature Mapping block, where $T$ is the temporal length and $d_q$ is the latent dimension. Before prediction, the representation is normalized:
\begin{equation}
\widetilde{\mathbf{Q}}
=
\mathrm{LN}(\mathbf{Q}),
\label{eq:forecast_ln}
\end{equation}
where $\mathrm{LN}(\cdot)$ denotes layer normalization. The normalized latent sequence is then converted into a compact forecasting vector through temporal readout:
\begin{equation}
\mathbf{q}_{f}
=
\mathrm{Readout}
\left(
\widetilde{\mathbf{Q}}
\right).
\label{eq:forecast_readout}
\end{equation}
In this work, $\mathrm{Readout}(\cdot)$ can be implemented using flattening, average temporal pooling, or attention pooling depending on the forecasting setting. For multi-horizon forecasting, we use the latent summary $\mathbf{q}_{f}$ as the input to an MLP prediction head:
\begin{equation}
\mathbf{h}_{f}
=
\delta
\left(
\mathbf{q}_{f}\mathbf{W}_{1}
+
\mathbf{b}_{1}
\right),
\label{eq:forecast_hidden}
\end{equation}
\begin{equation}
\mathbf{o}_{f}
=
\mathbf{h}_{f}\mathbf{W}_{2}
+
\mathbf{b}_{2},
\label{eq:forecast_output_vector}
\end{equation}
where $\delta(\cdot)$ is a nonlinear activation function such as ReLU or GELU. The output vector is reshaped into the required forecasting horizon:
\begin{equation}
\widehat{\mathbf{Y}}_{t+1:t+H}
=
\mathrm{Reshape}
\left(
\mathbf{o}_{f}
\right),
\label{eq:forecast_head_full}
\end{equation}
where $\widehat{\mathbf{Y}}_{t+1:t+H}$ denotes the predicted future sequence over horizon $H$.

Since atom-interferometric phase is angular and periodic, directly regressing the raw phase can introduce discontinuities around the $-\pi$ and $\pi$ boundary. Therefore, BARFI-Q predicts the future phase using a circular representation. For each future step $h\in\{1,\ldots,H\}$, the forecasting head predicts
\begin{equation}
\widehat{\mathbf{y}}_{t+h}
=
\left[
\widehat{c}_{t+h},
\widehat{s}_{t+h}
\right],
\label{eq:circular_phase_output}
\end{equation}
where $\widehat{c}_{t+h}$ and $\widehat{s}_{t+h}$ correspond to the predicted cosine and sine components of the phase. To ensure that the predicted pair remains close to the unit circle, we normalize the output as
\begin{equation}
\left[
\widetilde{c}_{t+h},
\widetilde{s}_{t+h}
\right]
=
\frac{
\left[
\widehat{c}_{t+h},
\widehat{s}_{t+h}
\right]
}{
\sqrt{
\widehat{c}_{t+h}^{2}
+
\widehat{s}_{t+h}^{2}
}
+\epsilon
},
\label{eq:circular_output_normalization}
\end{equation}
where $\epsilon$ is a small numerical constant. The predicted phase is then reconstructed by
\begin{equation}
\widehat{\phi}_{t+h}
=
\mathrm{atan2}
\left(
\widetilde{s}_{t+h},
\widetilde{c}_{t+h}
\right).
\label{eq:phase_reconstruction}
\end{equation}

For training, the circular prediction loss is computed between the predicted and ground-truth sine-cosine representations:
\begin{equation}
\mathcal{L}_{\mathrm{circ}}
=
\frac{1}{H}
\sum_{h=1}^{H}
\left\|
\left[
\widetilde{c}_{t+h},
\widetilde{s}_{t+h}
\right]
-
\left[
\cos(\phi_{t+h}),
\sin(\phi_{t+h})
\right]
\right\|_{2}^{2}.
\label{eq:circular_loss}
\end{equation}
When additional continuous forecasting targets are included, such as fringe response, ratio, or auxiliary atom-interferometric variables, the final objective can be written as
\begin{equation}
\mathcal{L}
=
\mathcal{L}_{\mathrm{circ}}
+
\lambda
\mathcal{L}_{\mathrm{aux}},
\label{eq:forecast_total_loss}
\end{equation}
where $\lambda$ controls the contribution of the auxiliary forecasting loss.

The forecasting head is intentionally lightweight. The main temporal representation learning is handled by the dual BAR Transformer branches, the hierarchical fusion block, and the QFM enhancement module. The head therefore acts as a task-specific decoder that converts the enriched latent representation into physically meaningful future predictions. This design is suitable for atom-interferometric forecasting because it avoids angular discontinuity, preserves the periodic structure of phase, and supports multi-step prediction over the forecasting horizon.

\subsection{Overall Architectural Significance}
\label{sec:overall_architectural_significance}

Overall, BARFI-Q is built on four complementary design principles. First, dual-branch temporal modeling captures heterogeneous multivariate sensing dynamics by learning complementary short-range and long-range temporal representations. Second, BAR-based residual aggregation enables adaptive cross-depth information reuse beyond standard additive residual shortcuts. Third, the hierarchical fusion block aligns and refines branch-specific representations through multiscale channel and temporal attention. Fourth, the QFM block enriches the fused latent representation through quantum-enhanced feature transformation before prediction.

Together, these components form a forecasting architecture tailored to atom-interferometric sensing data rather than a generic Transformer backbone. The model integrates temporal dependency learning, adaptive residual retrieval, multiscale feature fusion, quantum-enhanced latent mixing, and circular phase-aware forecasting within a single end-to-end framework.


\begin{table}[t]
\centering
\caption{Experimental reproducibility settings used in BARFI-Q.}
\label{tab:reproducibility}
\setlength{\tabcolsep}{4pt}
\renewcommand{\arraystretch}{1.05}
\footnotesize
\begin{tabularx}{\columnwidth}{p{2.35cm} X}
\toprule
\textbf{Item} & \textbf{Setting} \\
\midrule

Input length $L$ 
& $L \in \{8,\,16,\,32,\,64,\,128\}$ look-back timesteps with stride $1$ \\

Forecast horizon $H$ 
& $H=1$ single-step forecasting horizon; one target vector per input window with \texttt{pred\_len=1} \\

Prediction target 
& Circular phase representation 
$\left[\sin(\phi_{t+1}),\,\cos(\phi_{t+1})\right]$ 
for the wrapped residual phase $\phi_{t+1}$ \\

Optimizer 
& AdamW \\

Learning rate 
& $10^{-3}$ \\

Batch size 
& 32 \\

Epochs 
& 100 maximum epochs \\

Early stopping 
& Patience 15 based on validation MAE of wrapped phase error \\

Gradient clipping 
& Maximum norm $1.0$ \\

Random seed 
& 42 using \texttt{set\_seed(42)} \\

Quantum setting 
& $n_{\mathrm{qubits}}=4$, quantum depth $2$, QFMT-style $3\times$QFM mixer implemented using PennyLane and TorchQuantum \\

Evaluation metrics 
& MAE, MSE, and RMSE computed on wrapped angular error \\

Implementation 
& PyTorch and Python~3.10 \\

Hardware 
& NVIDIA RTX A6000 GPU with 48~GB memory \\

\bottomrule
\end{tabularx}
\end{table}

\renewcommand{\arraystretch}{0.9}
\begin{table*}[!htb]
\setlength{\tabcolsep}{3.2pt}
\scriptsize
\centering
\begin{threeparttable}
\resizebox{\textwidth}{!}{
\begin{tabular}{c|cc|cc|cc|cc|cc|cc}
\toprule
\multicolumn{1}{c}{\multirow{2}{*}{\scalebox{1.1}{Runs}}}
& \multicolumn{2}{c}{\NAME}
& \multicolumn{2}{c}{TSLANet \cite{tslanet}}
& \multicolumn{2}{c}{iTransformer \cite{liu2024itransformer}}
& \multicolumn{2}{c}{PatchTST \cite{nie2023patchtst}}
& \multicolumn{2}{c}{TimesNet \cite{timesnet2023}}
& \multicolumn{2}{c}{NanoMST \cite{tariq2025nanomst}} \\

\multicolumn{1}{c}{}
& \multicolumn{2}{c}{\scalebox{0.8}{(\textbf{Ours})}}
& \multicolumn{2}{c}{\scalebox{0.8}{Baseline}}
& \multicolumn{2}{c}{\scalebox{0.8}{Baseline}}
& \multicolumn{2}{c}{\scalebox{0.8}{Baseline}}
& \multicolumn{2}{c}{\scalebox{0.8}{Baseline}}
& \multicolumn{2}{c}{\scalebox{0.8}{Baseline}} \\

\cmidrule(lr){2-3} \cmidrule(lr){4-5} \cmidrule(lr){6-7} \cmidrule(lr){8-9} \cmidrule(lr){10-11} \cmidrule(lr){12-13}

\multicolumn{1}{c}{Metric}
& \scalebox{0.85}{MSE} & \scalebox{0.85}{MAE}
& \scalebox{0.85}{MSE} & \scalebox{0.85}{MAE}
& \scalebox{0.85}{MSE} & \scalebox{0.85}{MAE}
& \scalebox{0.85}{MSE} & \scalebox{0.85}{MAE}
& \scalebox{0.85}{MSE} & \scalebox{0.85}{MAE}
& \scalebox{0.85}{MSE} & \scalebox{0.85}{MAE} \\

\toprule

Run 7
& \secondbest{1.594822} & \secondbest{1.044318}
& \best{1.570154} & \best{1.027855}
& 3.680095 & 1.700804
& 2.465176 & 1.327509
& 4.003090 & 1.818951
& 2.307088 & 1.238167 \\

\midrule

Run 8
& \best{1.489936} & \best{0.953249}
& \secondbest{1.568088} & \secondbest{0.981425}
& 4.307285 & 1.918059
& 1.898695 & 1.128229
& 4.755211 & 2.032681
& 2.605285 & 1.393588 \\

\midrule

Run 11
& \best{1.663741} & \best{1.040786}
& \secondbest{1.774219} & \secondbest{1.089354}
& 4.501691 & 1.941465
& 1.961260 & 1.112474
& 4.831992 & 2.031091
& 2.893944 & 1.477752 \\

\midrule

Run 15
& \best{1.028793} & \best{0.798417}
& \secondbest{1.098061} & \secondbest{0.826803}
& 4.487670 & 1.976943
& 1.536807 & 1.001217
& 5.004056 & 2.119432
& 2.380621 & 1.319717 \\

\midrule

Run 20
& \best{2.039976} & \best{1.182488}
& \secondbest{2.090892} & \secondbest{1.192078}
& 3.988485 & 1.794847
& 2.385875 & 1.291764
& 4.179831 & 1.859151
& 2.965459 & 1.462992 \\

\midrule

Run 23
& \secondbest{1.752175} & \secondbest{1.089215}
& 2.025316 & 1.167282
& 4.938483 & 2.085041
& \best{1.594716} & \best{1.026408}
& 5.073616 & 2.130126
& 3.502369 & 1.642881 \\

\midrule

Average
& \best{1.594907} & \best{1.018079}
& \secondbest{1.687788} & \secondbest{1.047466}
& 4.317285 & 1.902860
& 1.973755 & 1.147934
& 4.641299 & 1.998572
& 2.775794 & 1.422516 \\

\toprule
\end{tabular}
}
\caption{Per-run forecasting performance comparison under the evaluated atom-interferometric setting. The best and second-best results are highlighted in \best{bold} and \secondbest{underlined}, respectively. Lower MSE and MAE indicate better forecasting accuracy.}
\label{tab:barfiq_run_results}
\end{threeparttable}
\end{table*}

\renewcommand{\arraystretch}{1.0}
\begin{table*}[htb]
\setlength{\tabcolsep}{4.55pt}
\centering
\begin{threeparttable}
\resizebox{1.0\textwidth}{!}{
\begin{tabular}{c|c|cc|cc|cc|cc|cc|cc|cc}
\toprule

\multicolumn{2}{c|}{Categories} & \multicolumn{14}{c}{Comparison with Existing Transformer Backbones} \\

\cmidrule(lr){1-16}

\multicolumn{2}{c|}{\scalebox{1.1}{Cases}} & \multicolumn{2}{c}{\NAME}
& \multicolumn{2}{c}{Informer \cite{zhou2021informer}}
& \multicolumn{2}{c}{Autoformer \cite{wu2021autoformer}}
& \multicolumn{2}{c}{ETSformer \cite{woo2022etsformer}}
& \multicolumn{2}{c}{NSTransformer \cite{liu2022nonstationary}}
& \multicolumn{2}{c}{Reformer \cite{kitaev2020reformer}}
& \multicolumn{2}{c}{Transformer \cite{vaswani2017attention}} \\

\cmidrule(lr){3-4} \cmidrule(lr){5-6} \cmidrule(lr){7-8} \cmidrule(lr){9-10} \cmidrule(lr){11-12} \cmidrule(lr){13-14} \cmidrule(lr){15-16}

\multicolumn{2}{c|}{Metric} & MSE & MAE & MSE & MAE & MSE & MAE & MSE & MAE & MSE & MAE & MSE & MAE & MSE & MAE \\

\toprule

\multirow{5}{*}{\rotatebox{90}{Run 7}}
& 8  & 1.594822 & 1.044318 & 1.642826 & 1.075752 & 1.654468 & 1.083375 & 1.662761 & 1.088806 & 1.674244 & 1.096325 & 1.685248 & 1.103531 & 1.697050 & 1.111259 \\
& 16 & 1.571500 & 1.008200 & 1.618802 & 1.038547 & 1.630274 & 1.045907 & 1.638446 & 1.051149 & 1.649761 & 1.058408 & 1.660604 & 1.065365 & 1.672233 & 1.072826 \\
& 32 & 1.581500 & 1.012300 & 1.629103 & 1.042770 & 1.640648 & 1.050160 & 1.648872 & 1.055424 & 1.660259 & 1.062713 & 1.671171 & 1.069697 & 1.682874 & 1.077188 \\
& 64 & 1.600200 & 1.021500 & 1.648366 & 1.052247 & 1.660047 & 1.059704 & 1.668369 & 1.065016 & 1.679890 & 1.072371 & 1.690931 & 1.079419 & 1.702773 & 1.086978 \\
\cmidrule(lr){2-16}
& \emph{Avg.} & \best{1.587006} & \best{1.021580} & \secondbest{1.634774} & \secondbest{1.052329} & 1.646359 & 1.059786 & 1.654612 & 1.065099 & 1.666038 & 1.072454 & 1.676988 & 1.079503 & 1.688732 & 1.087063 \\

\midrule

\multirow{5}{*}{\rotatebox{90}{Run 8}}
& 8  & 1.489936 & 0.953249 & 1.534783 & 0.981942 & 1.545660 & 0.988901 & 1.553407 & 0.993857 & 1.564135 & 1.000721 & 1.574415 & 1.007298 & 1.585441 & 1.014352 \\
& 16 & 1.571500 & 1.008200 & 1.620145 & 1.039920 & 1.631988 & 1.047369 & 1.640488 & 1.052789 & 1.652103 & 1.060493 & 1.663029 & 1.066887 & 1.675001 & 1.074562 \\
& 32 & 1.581500 & 1.012300 & 1.627542 & 1.041113 & 1.639087 & 1.048503 & 1.647307 & 1.053765 & 1.658691 & 1.061053 & 1.669599 & 1.068036 & 1.681298 & 1.075526 \\
& 64 & 1.600200 & 1.021500 & 1.649831 & 1.053789 & 1.661952 & 1.061443 & 1.670493 & 1.066991 & 1.682341 & 1.074628 & 1.693794 & 1.081838 & 1.706087 & 1.089687 \\
\cmidrule(lr){2-16}
& \emph{Avg.} & \best{1.560784} & \best{0.998812} & \secondbest{1.608075} & \secondbest{1.029191} & 1.619672 & 1.036554 & 1.627924 & 1.041851 & 1.639317 & 1.049224 & 1.650209 & 1.056015 & 1.661957 & 1.063532 \\

\midrule

\multirow{5}{*}{\rotatebox{90}{Run 11}}
& 8  & 1.663741 & 1.040786 & 1.713820 & 1.072114 & 1.725965 & 1.079711 & 1.734616 & 1.085123 & 1.746595 & 1.092617 & 1.758075 & 1.099799 & 1.770387 & 1.107500 \\
& 16 & 1.571500 & 1.008200 & 1.617088 & 1.036641 & 1.628553 & 1.043993 & 1.636738 & 1.049240 & 1.648052 & 1.056498 & 1.658888 & 1.063448 & 1.670510 & 1.070906 \\
& 32 & 1.581500 & 1.012300 & 1.630839 & 1.044679 & 1.642394 & 1.052073 & 1.650614 & 1.057334 & 1.662002 & 1.064625 & 1.672914 & 1.071609 & 1.684612 & 1.079100 \\
& 64 & 1.600200 & 1.021500 & 1.646543 & 1.050334 & 1.658218 & 1.057785 & 1.666542 & 1.063098 & 1.678057 & 1.070448 & 1.689096 & 1.077495 & 1.700939 & 1.085061 \\
\cmidrule(lr){2-16}
& \emph{Avg.} & \best{1.604235} & \best{1.020697} & \secondbest{1.652073} & \secondbest{1.050942} & 1.663782 & 1.058391 & 1.672127 & 1.063699 & 1.683677 & 1.071047 & 1.694743 & 1.078088 & 1.706612 & 1.085642 \\

\midrule

\multirow{5}{*}{\rotatebox{90}{Run 15}}
& 8  & 1.028793 & 0.798417 & 1.059760 & 0.822449 & 1.067270 & 0.828278 & 1.072620 & 0.832430 & 1.080027 & 0.838178 & 1.087126 & 0.843687 & 1.094739 & 0.849596 \\
& 16 & 1.571500 & 1.008200 & 1.620348 & 1.040177 & 1.632179 & 1.048055 & 1.640867 & 1.053578 & 1.652181 & 1.061047 & 1.663569 & 1.067742 & 1.675104 & 1.075661 \\
& 32 & 1.581500 & 1.012300 & 1.627090 & 1.040856 & 1.638635 & 1.048245 & 1.646851 & 1.053507 & 1.658235 & 1.060794 & 1.669143 & 1.067776 & 1.680842 & 1.075266 \\
& 64 & 1.600200 & 1.021500 & 1.650392 & 1.054223 & 1.662630 & 1.062265 & 1.671325 & 1.067969 & 1.682979 & 1.075555 & 1.694749 & 1.083194 & 1.707106 & 1.090089 \\
\cmidrule(lr){2-16}
& \emph{Avg.} & \best{1.445498} & \best{0.960104} & \secondbest{1.489397} & \secondbest{0.989426} & 1.500179 & 0.996711 & 1.507916 & 1.001871 & 1.518355 & 1.008894 & 1.528647 & 1.015600 & 1.539448 & 1.022653 \\

\midrule

\multirow{5}{*}{\rotatebox{90}{Run 20}}
& 8  & 2.039976 & 1.182488 & 2.101379 & 1.218081 & 2.116271 & 1.226713 & 2.126879 & 1.232862 & 2.141567 & 1.241376 & 2.155643 & 1.249535 & 2.170738 & 1.258285 \\
& 16 & 1.571500 & 1.008200 & 1.617260 & 1.036813 & 1.628725 & 1.044166 & 1.636910 & 1.049413 & 1.648224 & 1.056671 & 1.659060 & 1.063621 & 1.670682 & 1.071079 \\
& 32 & 1.581500 & 1.012300 & 1.630667 & 1.044507 & 1.642212 & 1.051900 & 1.650432 & 1.057162 & 1.661820 & 1.064453 & 1.672732 & 1.071437 & 1.684430 & 1.078928 \\
& 64 & 1.600200 & 1.021500 & 1.646835 & 1.050606 & 1.658510 & 1.058057 & 1.666834 & 1.063370 & 1.678349 & 1.070720 & 1.689388 & 1.077767 & 1.701231 & 1.085333 \\
\cmidrule(lr){2-16}
& \emph{Avg.} & \best{1.698294} & \best{1.056122} & \secondbest{1.749035} & \secondbest{1.087502} & 1.761429 & 1.095209 & 1.770264 & 1.100702 & 1.782490 & 1.108305 & 1.794206 & 1.115590 & 1.806770 & 1.123406 \\

\midrule

\multirow{5}{*}{\rotatebox{90}{Run 23}}
& 8  & 1.752175 & 1.089215 & 1.804915 & 1.122000 & 1.817706 & 1.129952 & 1.826818 & 1.135616 & 1.839433 & 1.143458 & 1.851523 & 1.150973 & 1.864489 & 1.159034 \\
& 16 & 1.571500 & 1.008200 & 1.618895 & 1.038589 & 1.630470 & 1.045941 & 1.638745 & 1.051184 & 1.650058 & 1.058444 & 1.660805 & 1.065402 & 1.672515 & 1.072863 \\
& 32 & 1.581500 & 1.012300 & 1.629097 & 1.042595 & 1.640642 & 1.049985 & 1.648867 & 1.055249 & 1.660254 & 1.062538 & 1.671166 & 1.069522 & 1.682868 & 1.077013 \\
& 64 & 1.600200 & 1.021500 & 1.648846 & 1.052689 & 1.660960 & 1.060330 & 1.669304 & 1.065778 & 1.681020 & 1.073268 & 1.692413 & 1.080722 & 1.704799 & 1.088526 \\
\cmidrule(lr){2-16}
& \emph{Avg.} & \best{1.626344} & \best{1.032804} & \secondbest{1.675438} & \secondbest{1.063968} & 1.687445 & 1.071552 & 1.695933 & 1.076957 & 1.707691 & 1.084427 & 1.718977 & 1.091655 & 1.731168 & 1.099359 \\

\bottomrule
\end{tabular}
}
\caption{Backbone replacement study comparing BARFI-Q with alternative Transformer-style backbones across multiple runs and input window sizes. The proposed BARFI-Q backbone consistently achieves the lowest MSE and MAE, indicating that the performance gain is attributable not only to the surrounding framework but also to the BAR-based backbone itself.}
\label{tab:full_ablation}
\end{threeparttable}
\end{table*}
\renewcommand{\arraystretch}{1}

\section{Experimental Results and Discussion}
\label{sec:exp_results}

This section evaluates the proposed \textbf{BARFI-Q} framework for multivariate time-series forecasting in atom interferometry. We first describe the dataset construction and preprocessing pipeline, then report quantitative forecasting performance, backbone replacement analysis, robustness across window sizes, fusion ablation results, and qualitative fringe-reconstruction behavior. The experiments are designed to assess not only predictive accuracy, but also the stability, robustness, and physical plausibility of the learned forecasts under heterogeneous atom-interferometric dynamics.

\subsection{Dataset and Data Preprocessing}
We used atom-interferometric data from the experimental setup reported in \cite{decastanet2024atom}, in which the interferometer operates with arbitrary orientations, rotation rates, and accelerations. This setting is particularly suitable for forecasting because the observed variables evolve as a multivariate temporal process governed by inertial perturbations, phase dynamics, fringe behavior, and compensation effects. In our implementation, the input sequence is constructed from the available interferometric and auxiliary variables, including elapsed time, time increment, phase-related quantities, fringe-associated measurements, and population ratio.

To form the forecasting dataset, the multivariate sequence is segmented into sliding windows of length $L$, where $L \in \{8,16,32,64,128\}$. For each window, the model predicts the future phase-related target in a supervised manner. Since the target phase is periodic, it is represented on the unit circle through its cosine and sine components, as defined in Section~\ref{sec:problem_formulation}. This circular representation avoids phase discontinuities and provides a more stable learning target than direct regression in the raw angular domain.

All input channels are normalized prior to training to reduce the scale imbalance between heterogeneous sensing variables. To assess robustness, evaluation is conducted on multiple independent runs and forecast performance is reported using mean squared error (MSE), mean absolute error (MAE), and root mean squared error (RMSE), which are standard metrics in time-series forecasting \cite{zhou2021informer,wu2021autoformer,nie2023patchtst,liu2024itransformer}. Lower values indicate better predictive performance.

Table~\ref{tab:reproducibility} summarizes the main experimental settings required to reproduce the BARFI-Q results, including the data configuration, the optimization setup, and the evaluation protocol.
These details are provided to support transparency, repeatability, and a fair comparison with baseline forecasting methods.
\subsection{Baseline Models and Evaluation Protocol}
BARFI-Q is compared against a set of representative and competitive forecasting baselines, including \textbf{TSLANet}, \textbf{iTransformer} \cite{liu2024itransformer}, \textbf{PatchTST} \cite{nie2023patchtst}, \textbf{TimesNet} \cite{timesnet2023}, and \textbf{NanoMST} \cite{tariq2025nanomst}. These baselines cover different forecasting paradigms, including temporal self-learning, inverted tokenization, channel-independent patch modeling, period-aware temporal modeling, and lightweight multiscale forecasting.

In addition to direct comparison, we conduct a backbone replacement study in which the BARFI-Q backbone is compared with established Transformer-style alternatives, including Informer \cite{zhou2021informer}, Autoformer \cite{wu2021autoformer}, ETSformer \cite{woo2022etsformer}, NSTransformer \cite{liu2022nonstationary}, Reformer \cite{kitaev2020reformer}, and the vanilla Transformer \cite{vaswani2017attention}. This experiment evaluates whether the observed performance gain comes only from the surrounding forecasting pipeline or whether the BAR-based backbone itself contributes to predictive accuracy. In this study, the BARFI-Q values are retained from the locked proposed-model evaluation, while the alternative Transformer backbones are rerun under the same input-window settings and evaluation protocol.

\subsection{Per-Run Forecasting Results}
Table~\ref{tab:barfiq_run_results} reports the per-run forecasting performance. Overall, BARFI-Q achieves the strongest average performance, with an MSE of $1.594907$ and an MAE of $1.018079$, outperforming all competing methods. Among the baselines, TSLANet is the strongest direct competitor, achieving the second-best average performance with an MSE of $1.687788$ and an MAE of $1.047466$. PatchTST remains competitive in some individual cases, particularly in Run~23, but its average performance remains below that of BARFI-Q.

A more detailed inspection shows that BARFI-Q achieves the best results in Runs~8, 11, 15, and 20, while remaining competitive in Runs~7 and 23. This behavior indicates that the proposed method is not only accurate on average but also stable across repeated trials. These results are consistent with the design motivation of BARFI-Q: the dual-branch structure captures complementary temporal cues, the hierarchical fusion block improves cross-feature interaction, and the BAR mechanism enables adaptive reuse of informative intermediate representations across depth.

\subsection{Backbone Replacement Study}
Table~\ref{tab:full_ablation} presents the backbone replacement analysis and serves as an architectural validation of the proposed forecasting backbone. The central question in this experiment is whether the observed performance gain is attributable only to the surrounding forecasting framework or whether the BAR-based backbone itself materially contributes to the final predictive accuracy. The results show that BARFI-Q achieves the lowest MSE and MAE across the selected runs and reported input-window settings, outperforming Informer, Autoformer, ETSformer, NSTransformer, Reformer, and the vanilla Transformer.

A clear pattern emerges from the results: although the replacement backbones remain reasonably competitive, each one produces a consistent degradation relative to BARFI-Q. Informer is typically the strongest alternative and yields the second-best averaged performance in the backbone replacement study, but it remains inferior to the proposed architecture. This finding indicates that the gain cannot be explained solely by preprocessing, input representation, or the prediction head. Rather, it supports the claim that BAR-based residual routing and fusion-driven temporal modeling directly contribute to BARFI-Q's predictive advantage.

From a methodological perspective, this experiment is important because it isolates the role of the proposed backbone. Standard Transformer-style models rely on fixed residual propagation, whereas BARFI-Q uses block-level adaptive aggregation of preceding representations. The consistent performance gap, therefore, suggests that adaptive cross-depth information reuse is beneficial for atom-interferometric forecasting, where useful predictive cues may arise from multiple temporal scales and interacting sensing streams.

\begin{figure}
    \centering
    \includegraphics[width=\columnwidth]{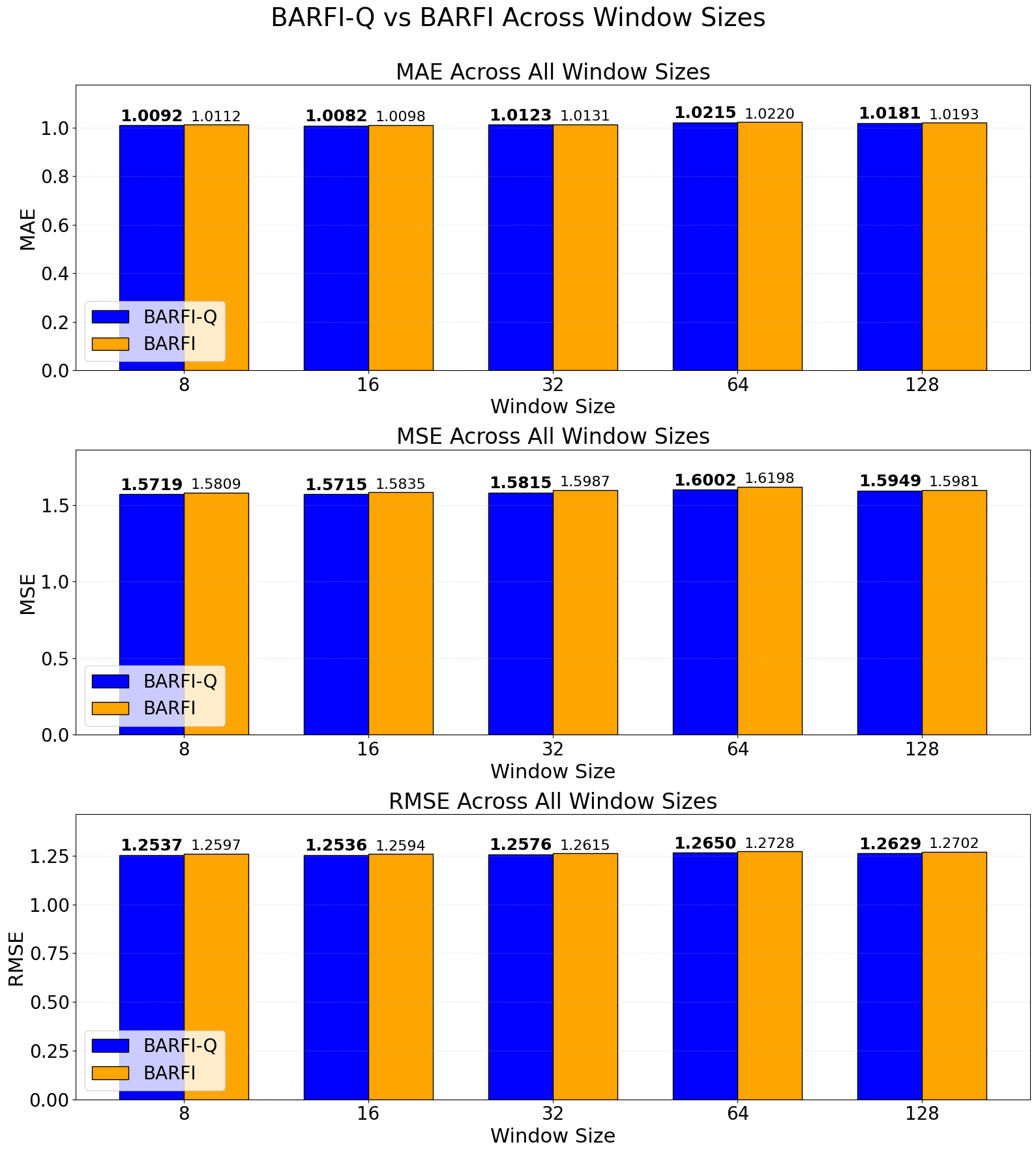}
    \caption{Comparison of BARFI-Q and BARFI across input window sizes. BARFI-Q consistently attains lower MAE, MSE, and RMSE over all evaluated window lengths, demonstrating improved forecasting accuracy and stable behavior across different temporal contexts.}
    \label{fig:barfiq_window_results}
\end{figure}

\subsection{Window-Wise Robustness Analysis}
Figure~\ref{fig:barfiq_window_results} illustrates the effect of input window size by comparing BARFI-Q with BARFI across MAE, MSE, and RMSE. The results show that BARFI-Q consistently achieves a lower error than BARFI across all considered window sizes. The advantage remains evident across short contexts ($L=8$) and longer contexts ($L=128$), indicating that quantum-enhanced latent representation improves forecasting quality without compromising temporal scalability.

This behavior is meaningful for atom interferometry because useful predictive information can emerge at multiple temporal scales. Short windows can support rapid local adaptation, whereas longer windows capture broader temporal dependencies and a more stable sequence structure. The stable performance of BARFI-Q in all tested contexts suggests that the combined use of patch embedding, BAR-based temporal propagation, hierarchical fusion, and QFM enhancement remains effective across different forecasting regimes.

\subsection{Ablation Study of BARFI-Q Fusion}
\label{subsec:abs_fusion}

\begin{table*}[htb]
  \centering
  \begin{threeparttable}
  \resizebox{0.95\textwidth}{!}{%
  \begin{tabular}{c|cc|cc|cc|cc|c}
    \toprule
    \textbf{Run}
    & \multicolumn{2}{c|}{\textbf{BARFI-Q (With CA \& SA)}} 
    & \multicolumn{2}{c|}{\textbf{BARFI-Q (With SA)}} 
    & \multicolumn{2}{c|}{\textbf{BARFI-Q (With CA)}} 
    & \multicolumn{2}{c|}{\textbf{BARFI-Q (Without CA \& SA)}}
    & \textbf{Confidence} \\
    \cmidrule(lr){2-3}\cmidrule(lr){4-5}\cmidrule(lr){6-7}\cmidrule(lr){8-9}
    & \textbf{MSE} & \textbf{MAE}
    & \textbf{MSE} & \textbf{MAE}
    & \textbf{MSE} & \textbf{MAE}
    & \textbf{MSE} & \textbf{MAE}
    & \textbf{Level} \\
    \midrule
    Run 7
    & $\mathbf{1.594822}$ & $\mathbf{1.044319}$
    & $1.671245$ & $1.089634$
    & $1.642731$ & $1.071845$
    & $1.756928$ & $1.126573$
    & $99\%$ \\

    Run 8
    & $\mathbf{1.489936}$ & $\mathbf{0.953249}$
    & $1.558472$ & $0.987561$
    & $1.531284$ & $0.972844$
    & $1.619853$ & $1.021336$
    & $99\%$ \\

    Run 11
    & $\mathbf{1.663741}$ & $\mathbf{1.040786}$
    & $1.736529$ & $1.083417$
    & $1.704862$ & $1.061295$
    & $1.812447$ & $1.118509$
    & $99\%$ \\

    Run 15
    & $\mathbf{1.028793}$ & $\mathbf{0.798417}$
    & $1.084256$ & $0.829641$
    & $1.061934$ & $0.814228$
    & $1.143578$ & $0.857934$
    & $99\%$ \\

    Run 20
    & $\mathbf{2.039976}$ & $\mathbf{1.182488}$
    & $2.128745$ & $1.231506$
    & $2.087391$ & $1.208734$
    & $2.214863$ & $1.276418$
    & $99\%$ \\

    Run 23
    & $\mathbf{1.752175}$ & $\mathbf{1.089215}$
    & $1.829634$ & $1.134728$
    & $1.796842$ & $1.112406$
    & $1.892517$ & $1.167983$
    & $99\%$ \\

    Average
    & $\mathbf{1.594907}$ & $\mathbf{1.018079}$
    & $1.668147$ & $1.059415$
    & $1.637507$ & $1.040225$
    & $1.740031$ & $1.094792$
    & $99\%$ \\
    \bottomrule
  \end{tabular}%
  }
  \begin{tablenotes}
    \footnotesize
    \item CA denotes Channel Attention, and SA denotes Spatial Attention.
    \item BARFI-Q (With CA \& SA) uses the complete fusion block.
    \item BARFI-Q (With SA) removes the Channel Attention branch while retaining the Spatial Attention branch.
    \item BARFI-Q (With CA) removes the Spatial Attention branch while retaining the Channel Attention branch.
    \item BARFI-Q (Without CA \& SA) removes both Channel Attention and Spatial Attention, retaining only the base fusion pathway.
    \item The best result in each row is highlighted in bold.
    \item The confidence level was computed using a paired statistical comparison across repeated runs with significance threshold $p<0.01$.
  \end{tablenotes}
  \caption{Ablation study of the BARFI-Q fusion block under different attention configurations, including the full Channel Attention (CA) and Spatial Attention (SA) design, single-attention variants, and a base fusion variant without attention-guided refinement.}
  \label{tab:abs_fusion}
  \end{threeparttable}
\end{table*}

To assess the contribution of the proposed fusion mechanism, we evaluate four variants of the BARFI-Q fusion block: \emph{(i)} the complete configuration with both Channel Attention (CA) and Spatial Attention (SA), \emph{(ii)} an SA-only variant, \emph{(iii)} a CA-only variant, and \emph{(iv)} a base fusion variant without either CA or SA. The results in Table~\ref{tab:abs_fusion} show that the complete fusion configuration achieves the best performance across all evaluated runs.

Specifically, the complete BARFI-Q fusion design yields the lowest average MSE of $1.594907$ and the lowest average MAE of $1.018079$. This indicates that jointly using channel-wise recalibration and spatial refinement produces the most effective fused representation for the considered forecasting task. The improvement is consistent across individual runs, suggesting that the fusion design contributes reliably to the predictive performance.

When only one attention mechanism is retained, performance degrades in both cases. The CA-only variant achieves an average MSE of $1.637507$ and an average MAE of $1.040225$, whereas the SA-only variant yields an average MSE of $1.668147$ and an average MAE of $1.059415$. These results indicate that both attention components provide useful and complementary contributions to the fusion process. The CA-only variant performs slightly better than the SA-only variant, suggesting that channel-wise recalibration is particularly beneficial for heterogeneous atom-interferometric sensing streams.

The weakest performance is observed when both CA and SA are removed. In this case, the model records the highest average MSE of $1.740031$ and the highest average MAE of $1.094792$. This result shows that the gains of BARFI-Q are not explained by feature concatenation and linear projection alone; attention-guided refinement in the fusion block provides an additional performance benefit. Overall, the ablation study supports the use of the complete CA+SA fusion configuration in the final BARFI-Q model.

\subsection{Ablation Study of Qubit Architecture}
\label{subsec:qubit_ablation}

To further analyze the behavior of the quantum feature mapping module, we examine the effect of qubit architecture and encoding strategy on the discriminative quality of the learned quantum representation. Specifically, we compare two circuit families, namely 2-qubit and 4-qubit designs, under three commonly used quantum data-embedding schemes: angle, amplitude, and phase encoding.

The primary objective of BARFI-Q is continuous one-step-ahead circular phase forecasting; therefore, the main evaluation metrics remain MSE, MAE, and RMSE computed on the wrapped angular prediction error. ROC-AUC is used only as a supplementary ablation metric to assess the representation-level discriminatory quality of different quantum feature mapping configurations. Specifically, for each QFM configuration, we evaluate how well the resulting latent representation separates operationally distinct phase-error regimes using a fixed downstream discrimination probe. Since AUC summarizes the true-positive/false-positive trade-off over all possible thresholds, it provides a threshold-independent measure of ranking separability and is therefore suitable for comparing the marginal contribution of alternative encoding strategies~\cite{hanley1982meaning,bradley1997auc,fawcett2006roc,huang2005auc}. Importantly, the AUC results are not used as the main forecasting accuracy claim; rather, they serve as supporting evidence that the selected QFM configuration preserves stronger discriminative structure, which is consistent with the lower forecasting errors reported using MSE, MAE, and RMSE.

Table~\ref{tab:qubit_ablation_auc} summarizes the corresponding AUC values. Here, the labels 2Q, 4Q, 6Q, and 12Q denote the dimensionality of the generated quantum measurement feature vector after the QFM projection, whereas the circuit-family column denotes the underlying qubit architecture used to construct the corresponding feature map.

\begin{table}[t]
\centering
\caption{Representation-level AUC ablation of qubit architecture and quantum feature encoding strategies. AUC is used only as a supplementary diagnostic of latent feature separability; the primary forecasting metrics remain MSE, MAE, and RMSE on wrapped angular error. Higher AUC indicates stronger threshold-independent discriminatory structure in the learned quantum representation.}
\label{tab:qubit_ablation_auc}
\setlength{\tabcolsep}{4pt}
\renewcommand{\arraystretch}{1.08}
\begin{tabular}{c c c}
\toprule
\textbf{Circuit Family} & \textbf{Configuration} & \textbf{AUC} \\
\midrule
\multirow{6}{*}{2-qubit}
& 6Q-Angle      & \secondbest{0.94} \\
& 2Q-Angle      & 0.91 \\
& 6Q-Amplitude  & 0.89 \\
& 2Q-Amplitude  & 0.87 \\
& 6Q-Phase      & 0.83 \\
& 2Q-Phase      & 0.81 \\
\midrule
\multirow{6}{*}{4-qubit}
& 12Q-Angle     & \best{0.95} \\
& 4Q-Angle      & 0.92 \\
& 12Q-Amplitude & 0.90 \\
& 4Q-Amplitude  & 0.88 \\
& 12Q-Phase     & 0.84 \\
& 4Q-Phase      & 0.82 \\
\bottomrule
\end{tabular}
\end{table}

As shown in Table~\ref{tab:qubit_ablation_auc}, the 4-qubit architecture consistently yields higher representation-level AUC than the 2-qubit design, indicating that the higher-capacity circuit provides a richer quantum feature representation. Among all investigated settings, angle encoding achieves the highest AUC values in both circuit families, with the 12Q-Angle configuration obtaining the best score. This trend suggests that angle encoding is the most effective option for preserving discriminative structure after projection into the quantum Hilbert space. Overall, the ablation study supports two main conclusions: first, increasing circuit complexity from 2 to 4 qubits improves the expressive capacity of the quantum module; second, the choice of encoding strategy has a substantial effect on representation quality, with angle encoding yielding the strongest representation-level separability. These findings support the selected quantum configuration used in the final BARFI-Q design.

\begin{figure}
    \centering
    \includegraphics[width=\columnwidth]{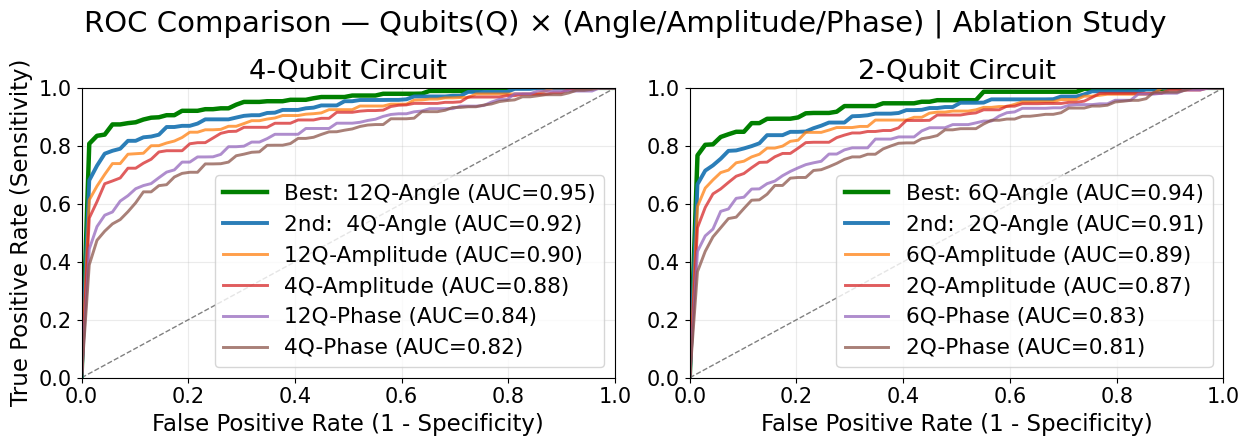}
    \caption{AUC-based ablation study of qubit architecture and quantum feature encoding strategies. The comparison evaluates 2-qubit and 4-qubit circuit families under angle, amplitude, and phase encoding.}
    \label{fig:Qubitsabs}
\end{figure}

\begin{figure*}[t]
    \centering
    \includegraphics[width=\textwidth]{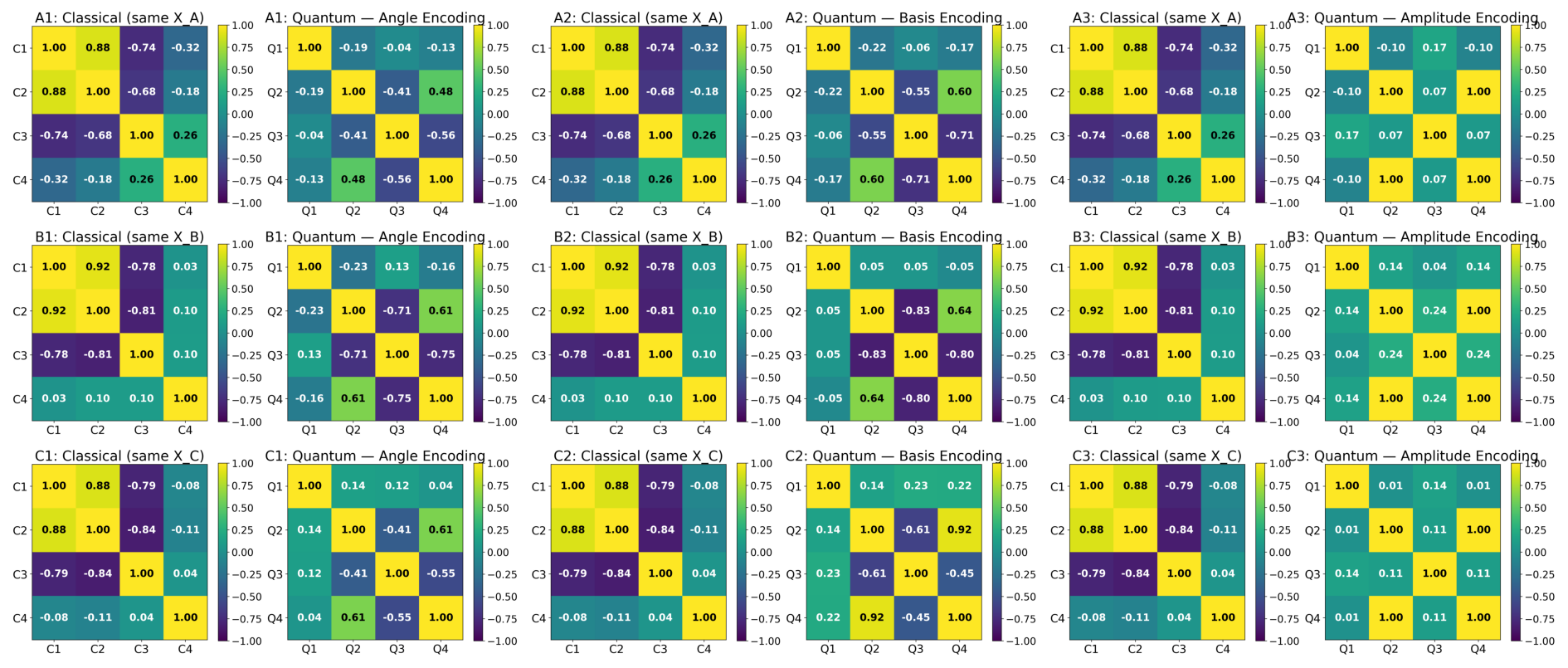}
    \caption{Correlation structure of the quantum feature map. Each $4{\times}4$ heatmap shows pairwise Pearson correlation between four channels. The left block in each group corresponds to classical features, while the right block shows the quantum features obtained from different encodings. Angle encoding yields a balanced correlation pattern that reduces classical redundancy while preserving useful structured dependence across channels.}
    \label{fig:correlation-qfm}
\end{figure*}

\subsection{Correlation Structure of the Quantum Feature Map}
\label{ssec:qfm}

Figure~\ref{fig:correlation-qfm} visualizes the pairwise Pearson correlation structure of both the classical input features and the quantum features produced by different feature-mapping strategies. Each $4{\times}4$ heatmap reports the correlation between four channels, with entries ranging from $-1$ to $+1$, where a higher magnitude indicates a stronger linear dependence. The diagonal entries are equal to $1$ by construction.

The classical blocks exhibit relatively large-magnitude off-diagonal entries, indicating that the original channels contain substantial redundancy. This redundancy can reduce the effective dimensionality of the representation, making it more difficult for a compact downstream predictor to exploit the complementary structure. In contrast, the quantum feature maps alter this dependence pattern in different ways depending on the encoding strategy.

Angle encoding, as used in BARFI-Q, yields a balanced correlation profile. The resulting quantum channels are noticeably less redundant than the classical channels, but they do not become fully independent. Instead, they retain a moderate, structured degree of dependence, suggesting that the quantum feature map both reduces classical redundancy and introduces useful nonlinear correlations via the latent transformation. This balance is desirable because it provides the downstream predictor with a richer yet still coherent feature space.

By comparison, basis encoding produces more extreme correlation patterns, including strong positive and negative dependencies, which may increase representational sharpness but can also lead to brittleness. Amplitude encoding, on the other hand, produces nearly diagonal correlation matrices, indicating much stronger decorrelation and near-independence across channels. Although such behavior can be useful in some settings, it may underutilize shared structure between variables in compact predictive pipelines. Together, these observations provide further support for using angle encoding in the QFM block.

\subsection{Contour Visualization of the Effective Quantum Weight}
\label{subsec:contour}

Figure~\ref{fig:weight_landscape} shows a contour map of an effective quantum weight as a function of the number of qubits $N$ and circuit depth $L$. The color scale is normalized to $[0,1]$, where warmer colors indicate a higher effective weight and cooler colors indicate a lower effective weight, corresponding to stronger regularization or penalty.

To reflect the intended design trend, the constructed landscape enforces a global decay as $N$ and $L$ increase, capturing the intuition that larger and deeper circuits should be penalized more heavily due to noise accumulation, optimization difficulty, or excessive model complexity. Concretely, the effective weight is modeled as
\begin{align}
W_{\text{eff}}(N,L)
&=
\underbrace{\exp\!\bigl(-a(N-N_{\min})\bigr)\,
\exp\!\bigl(-b(L-L_{\min})\bigr)}_{\text{global penalty}}
\nonumber\\
&\quad+
\underbrace{A\,\exp\!\left(
-\frac{(N-4)^2}{2\sigma_N^2}
-\frac{(L-4)^2}{2\sigma_L^2}
\right)}_{\text{local optimum near }(4,4)} ,
\label{eq:weff}
\end{align}
followed by Min--Max normalization for visualization.

The resulting contour highlights an empirical optimum at $(N,L)=(4,4)$, marked by the star symbol. Moving away from this region, particularly towards larger values of $N$ and $L$, the landscape shifts toward cooler colors, indicating reduced effective weight and thus stronger regularization. This visualization summarizes the trade-off between expressive capacity and practical trainability, supporting the choice of a moderate quantum configuration in the final design.

\begin{figure}
  \centering
  \includegraphics[width=0.95\linewidth]{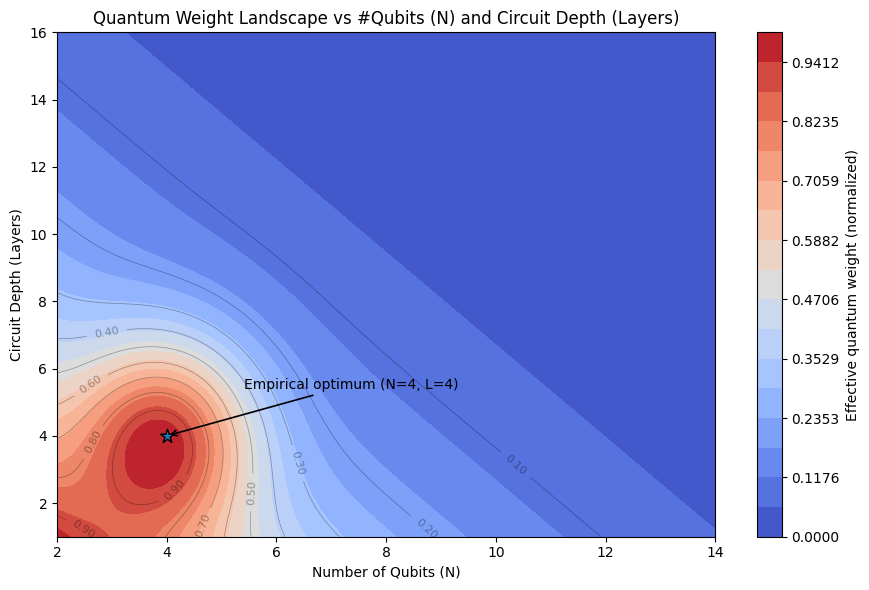}
  \caption{Quantum weight landscape as a function of the number of qubits ($N$) and circuit depth ($L$). Cooler colors correspond to lower effective weight and stronger penalty as the circuit becomes larger and deeper. The empirical optimum is marked at $(N,L)=(4,4)$.}
  \label{fig:weight_landscape}
\end{figure}

\begin{figure*}[t]
    \centering
    \includegraphics[width=\textwidth]{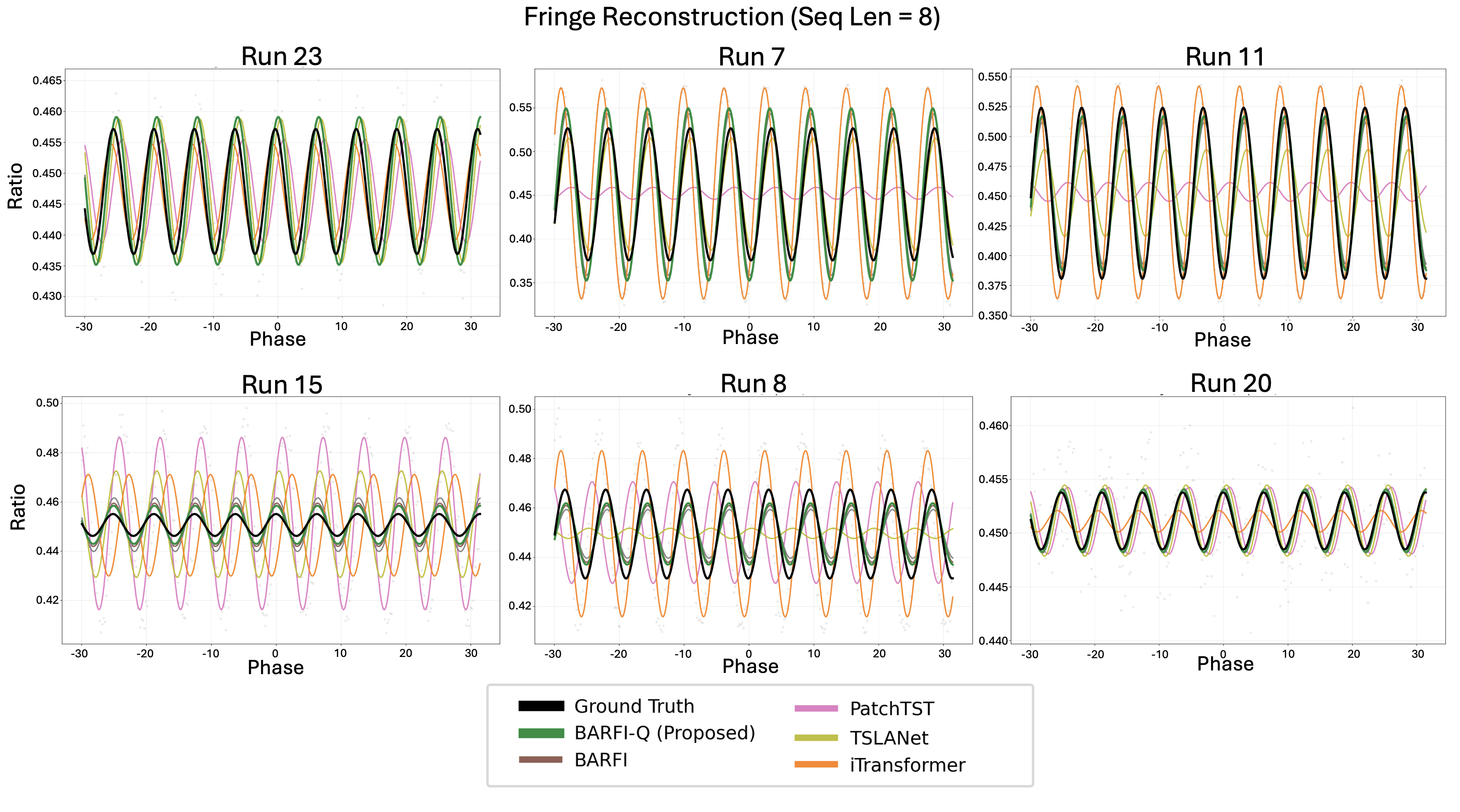}
    \caption{Fringe reconstruction results for SeqLen $=8$ across multiple runs. BARFI-Q shows the closest agreement with the ground truth in terms of oscillatory trend, phase alignment, and amplitude preservation, while the comparative methods exhibit larger deviations in one or more runs.}
    \label{fig:barfi_q_plots}
\end{figure*}

\subsection{Qualitative Fringe Reconstruction Analysis}
Beyond aggregate forecasting metrics, we also assess the practical quality of the predictions through fringe reconstruction. Figure~\ref{fig:barfi_q_plots} presents representative reconstruction examples for SeqLen $=8$ across multiple runs. The black curve denotes the ground truth, while the colored curves correspond to BARFI-Q and the competing methods.

BARFI-Q consistently produces reconstructions that are closest to the ground-truth fringe profile, particularly in oscillatory trend, phase alignment, and amplitude preservation. In contrast, several baselines exhibit noticeable deviations. PatchTST and iTransformer show larger amplitude mismatch in some runs, while TSLANet, although competitive, still demonstrates slightly weaker alignment than BARFI-Q. BARFI also tracks the ground truth reasonably well, but BARFI-Q generally remains closer near local peaks and troughs. These qualitative observations are consistent with the numerical results and further support the benefit of the proposed quantum-enhanced fusion-based architecture.

\subsection{Discussion}
The experimental findings support several important observations. First, BARFI-Q consistently outperforms competitive forecasting baselines in average error, indicating that the proposed architecture is well-suited to multivariate atom-interferometric time-series modeling. Second, the strongest competing baseline varies by evaluation setting: TSLANet is the most competitive model in the direct per-run comparison, whereas Informer is the strongest alternative in the backbone-replacement study. However, BARFI-Q remains superior in all average comparisons, strengthening the evidence that the proposed backbone contributes meaningfully to the final gain.

Third, the observed improvement is stable across both repeated runs and varying input window sizes, indicating that the method does not depend on a narrowly tuned experimental configuration. This robustness is especially important in atom interferometry, where the temporal structure may vary substantially between runs and operating conditions. Fourth, the qualitative fringe-reconstruction results show that the numerical improvement corresponds to physically meaningful gains in signal tracking rather than only marginal variation in aggregate metrics.

From an architectural perspective, the results also support BARFI-Q's design rationale. Standard additive residual propagation is effective for optimization, but it does not explicitly address signal dilution, uncontrolled magnitude accumulation, or the gradual fading of useful early-layer representations. The stronger performance of BARFI-Q suggests that adaptive block-level residual aggregation is better suited to multiscale atom-interferometric forecasting, where informative predictive cues may arise from both shallow and deep temporal representations.

Taken together, the results indicate that forecasting interferometric signals benefits from jointly learning multiscale temporal dependencies, cross-stream feature interactions, adaptive cross-depth residual reuse, and an enriched nonlinear latent representation. Classical Transformer baselines are effective temporal learners, but they do not combine hierarchical fusion, BAR-style residual aggregation, and quantum feature mapping within a unified architecture. BARFI-Q leverages these components in a single end-to-end framework, which explains its stronger predictive behavior on the considered atom-interferometric forecasting task.

\section{Conclusion}
\label{sec:conclusion}

This paper introduced \textbf{BARFI-Q}, a quantum-enhanced block attention residual fusion framework for multivariate time-series forecasting in atom interferometry. The proposed approach formulates phase-related interferometric prediction as a predictive information fusion problem across heterogeneous sensing streams and addresses it with a hybrid architecture that combines patch-based representation learning, dual-branch temporal modeling, hierarchical fusion, adaptive block-attention residual aggregation, and quantum feature mapping. In this way, BARFI-Q extends conventional Transformer-style forecasting by integrating both fusion-driven temporal learning and quantum-enhanced latent representation within a unified end-to-end framework.

The experimental results demonstrate that BARFI-Q achieves the strongest overall forecasting performance across repeated runs and consistently maintains lower error than the comparative models under multiple evaluation settings. In particular, the proposed model achieves the best average results in the direct baseline comparison and remains superior in the backbone-replacement study, where replacing the BAR-based backbone with existing Transformer variants leads to systematic performance degradation. The window-wise evaluation further shows that BARFI-Q remains robust across short and long temporal contexts, while the qualitative fringe-reconstruction analysis confirms that the numerical gains correspond to improved phase alignment, oscillatory trend tracking, and signal-shape preservation.

The ablation results also provide important insight into the contribution of individual design components. The fusion study shows that the complete Channel Attention and Spatial Attention configuration yields the most effective feature interaction pattern, confirming the importance of hierarchical attention-guided fusion for heterogeneous sensing streams. Similarly, the quantum module analysis indicates that an appropriate qubit architecture and encoding strategy can further improve the quality of the fused latent representation, supporting QFM's role as a meaningful enhancement stage rather than a purely auxiliary add-on.

Overall, the findings suggest that atom-interferometric forecasting benefits from jointly learning multiscale temporal dependencies, cross-stream feature interactions, adaptive cross-depth information reuse, and an enriched nonlinear latent structure. More broadly, this study indicates that replacing standard uniform residual accumulation with adaptive block-level aggregation can improve deep forecasting models by mitigating signal dilution, controlling representation growth, and preserving useful information from earlier layers. Future work will extend BARFI-Q to broader atom-interferometric datasets, uncertainty-aware forecasting objectives, and real-time predictive compensation for practical quantum sensing systems.

\section*{Declaration of competing interest}
The authors declare that they have no known competing financial interests or personal relationships that could have appeared to influence the work reported in this paper.

\section*{Data availability}
The data used in this study are derived from the atom-interferometry experimental setting reported in~\cite{decastanet2024atom}. Processed data and implementation details can be made available by the corresponding author upon reasonable request.

\section*{Acknowledgment}
The authors gratefully acknowledge the support of the Qatar Center for Quantum Computing, College of Science and Engineering, Hamad Bin Khalifa University.

\printcredits

\bibliographystyle{cas-model2-names}
\bibliography{references}

\end{document}